\newtheorem{theorem}{Theorem}[section]
\newtheorem{lemma}[theorem]{Lemma}
\newcommand{\remove}[1]{}
\newenvironment{proof}[1][IEEEproof]{\begin{trivlist}
\item[\hskip \labelsep {\bfseries #1}]}{\end{trivlist}}
\newenvironment{definition}[1][Definition]{\begin{trivlist}
\item[\hskip \labelsep {\bfseries #1}]}{\end{trivlist}}
\newcommand{\qed}{\nobreak \ifvmode \relax \else
      \ifdim\lastskip<1.5em \hskip-\lastskip
      \hskip1.5em plus0em minus0.5em \fi \nobreak
      \vrule height0.75em width0.5em depth0.25em\fi}
\begin{document}

\title{TinyLFU: A Highly Efficient Cache Admission Policy}
\date{}
\author{
	Gil Einziger\thanks{Much of the work was done while this author was with the Technion}\\
    Dipartimento di Elettronica\\
    Politecnico di Torino\\
    Italy\\
    \texttt{gil.einziger@polito.it}
\and
    Roy Friedman\\
    Computer Science Department\\
    Technion\\
    Haifa 32000, Israel\\
    \texttt{roy@cs.technion.ac.il}
\and
    Ben Manes\\
    Independent\\
    \texttt{Ben.Manes@gmail.com}
}
\maketitle

\begin{abstract}
This paper proposes to use a \emph{frequency based cache admission policy} in order to boost the effectiveness of caches subject to skewed access distributions.
Given a newly accessed item and an eviction candidate from the cache, our scheme decides, based on the recent access history, whether it is worth admitting the new item into the cache at the expense of the eviction candidate.

Realizing this concept is enabled through a novel approximate LFU structure called \emph{TinyLFU}, which maintains an approximate representation of the access frequency of a \emph{large sample} of recently accessed items.
TinyLFU is very compact and light-weight as it builds upon Bloom filter theory.

We study the properties of TinyLFU through simulations of both synthetic workloads as well as multiple real traces from several sources.
These simulations demonstrate the performance boost obtained by enhancing various replacement policies with the TinyLFU eviction policy.
Also, a new combined replacement and eviction policy scheme nicknamed \emph{W-TinyLFU} is presented.
W-TinyLFU is demonstrated to obtain equal or better hit-ratios than other state of the art replacement policies on these traces.
It is the only scheme to obtain such good results on all traces.
\end{abstract}

\thispagestyle{empty}

\newpage

\section{Introduction}

\emph{Caching} is one of the most basic and effective methods in computer science for boosting system's performance in a multitude of domains.
It is obtained by keeping a small percentage of the data items in a memory that is faster and/or closer to the application in settings where the
entire data domain
does not fit into this fast nearby memory.
The intuitive reason why caching works is that data accesses in many domains of computer science exhibit a considerable degree of \emph{``locality''}.
A more formal way to capture this ``locality'' is to characterize the access frequency of all possible data items through a probability distribution and noting
that in many interesting domains of computer science,
the probability distribution is highly skewed, meaning that a small number of objects are
much more likely to be accessed than other objects.
Further, in many workloads, the access pattern, and consequently the corresponding probability distribution, change over time.
This phenomenon is also known as \emph{``time locality''}.

When a data item is accessed, if it already appears in the cache, we say that there is a \emph{cache hit}; otherwise, it is a \emph{cache miss}.
The ratio between the number of cache hits and the total number of data accesses is known as the \emph{cache hit-ratio}.
Hence, if the items kept in the cache correspond to the most frequently accessed items, then the cache is likely to yield a higher hit-ratio~\cite{LRU}.

Given that cache sizes are often limited, cache designers face the dilemma of how to choose the items that are stored in the cache.
In particular, when the memory reserved for the cache becomes full, deciding which items should be \emph{evicted} from the cache.
Obviously, eviction decisions should be done in an efficient manner, in order to avoid a situation in which the computation and space overheads required to
answer these questions surpasses the benefit of using the cache. The space used by the caching mechanism in order to decide which
items should be inserted into the cache and which items should be evicted is called the \emph{meta-data} of the cache. In many caching schemes,
the time complexity of manipulating the meta-data as well as the size of the meta-data are proportional to the number of items stored in the cache.

When the probability distribution of the data access pattern is constant over time, it is easy to show that the \emph{Least Frequently Used} (LFU)
yields the highest cache hit ratio~\cite{BCFPS99,ZipfCaching}.
According to LFU, in a cache of size $n$ items, at each moment the $n$ most frequently used items thus far are kept in the cache.
Yet, LFU has two significant limitations.
First, known implementations of LFU require maintaining large and complex meta-data.
Second, in most practical workloads, the access frequency changes radically over time.
For example, consider a video caching service; a video clip that is extremely popular on a given day might not be accessed at all a few days later.
Hence, there is no point in keeping that item in cache once its popularity has faded just because it was once very popular.

Consequently, various alternatives to LFU have been developed.
Many of these include aging mechanisms and/or focus on a limited size \emph{window} of last $W$ accesses.
Such aging is both in order to limit the size of the meta-data and in order to adapt the caching and eviction decisions to the more recent popularity of items.
A prominent example of such a scheme is called \emph{Window LFU (WLFU)}~\cite{WLFU}, as elaborated below.
Further, in the vast majority of cases, a newly accessed item is always inserted into the cache and the caching scheme focuses solely on its \emph{eviction policy}, i.e.,
deciding which item should be evicted.
This is because maintaining meta-data for objects not currently in the cache is deemed impractical.

Let us note that due to the prohibitively high cost of maintaining a frequency histogram for all objects ever encountered, published works that implement
the LFU scheme only maintain the frequency histogram w.r.t. items that are in the cache~\cite{SurveyOfCacheReplecmentStrategies}.
Hence, we distinguish between them by referring to the former as \emph{Perfect LFU} (PLFU) and the latter as \emph{In-Memory LFU}.
Since WLFU outperforms In-Memory LFU~\cite{BCFPS99} (at the cost of larger meta-data), here we only address WLFU and PLFU.

A popular alternative to LFU that relies on the ``time locality'' property is known as \emph{Least Recently Used} (LRU)~\cite{LRU}, by which the last item
accessed is always inserted into the cache and the least recently accessed item is evicted (when the cache is full).
LRU can be implemented much more efficiently than LFU\footnote{Yet, LRU is still considered too slow for hardware caches and operating systems page caching, and therefore in these highly demanding situations, we find approximations of LRU rather than exact LRU implementations.} and automatically
adapts to temporal changes in the data access patterns and to bursts on the workloads.
Yet, under many workloads, LRU requires much larger caches than LFU in order to obtain the same hit-ratio.

\subsubsection*{Contributions}
The main contribution of this paper is in showing the feasibility and effectiveness of augmenting caches with an approximate LFU based \emph{admission policy}.
This consists of the following aspects:
First, we present a caching architecture in which an accessed item is only inserted into the cache if an admission policy decides that the cache hit ratio is likely to benefit from
replacing it with the \emph{cache victim} (as chosen by the cache's replacement policy).
Second, we develop a novel highly space efficient data structure, nicknamed TinyLFU, that can represent an adaptable approximate LFU statistics for very large access \emph{sample}\footnote{
We use the term ``sample'' here rather than a ``window'' to highlight the role it plays in TinyLFU as well as to distinguish it from other references to ``windows'' in this work.
}.
For example, the meta-data required to implement TinyLFU, even for a domain of millions of items, can fit in a single memory page and thus can remain pinned in physical memory, allowing for fast manipulation.
We show through both a formal analysis and simulations that despite being approximated, the frequency ordering obtained by TinyLFU mimics closely the corresponding ordering that would appear in a true
access frequency histogram for these items.

Third, we report on the integration of TinyLFU into the Caffeine high performance Java cache open source project~\cite{CaffeineProject}.
In particular, this part includes another optimization of TinyLFU called \emph{W-TinyLFU} to better handle bursty workloads.

Fourth, we perform an extensive performance study of TinyLFU enhanced eviction policies and compare them to multiple other cache management policies including state of the art ones like LIRS~\cite{LIRS} and ARC~\cite{ARC,ARCpatent}.
We exhibit that for skewed workloads, both synthetic and real world traces, the TinyLFU admission policy dramatically boosts the performance of several known eviction policies.
In fact, in static distributions, after applying the TinyLFU admission policy, the difference between the various eviction policies becomes marginal.
That is, even the most naive eviction policy yields almost the same performance as true LFU.
In the case of real (dynamic) distributions, a smart eviction policy still has an impact, but much less profound than without the admission policy.
Further, on the various traces we have tested, W-TinyLFU's hit-ratio tops or equals all other cache management policies we have experimented with, and it is the only policy that performed so well consistently.
In fact, for most traces, even an LRU eviction policy enhanced with a TinyLFU admission policy obtained the same (best) results.

The rest of this paper is organized as follows:
Related work is discussed in Section~\ref{sec:related}.
We present TinyLFU in Section~\ref{sec:tinylfu}.
Section~\ref{sec:caffeine} describes the adaptation of TinyLFU into Caffeine~\cite{CaffeineProject} and our novel W-TinyLFU scheme.
The performance measurements are reported in Section~\ref{sec:results} and we conclude with a discussion in Section~\ref{sec:discussion}.

\section{Related Work}
\label{sec:related}

\subsection{Cache Replacement}

As indicated in the Introduction, while PLFU is an optimal policy when the access distribution is static,
the cost of maintaining a complete frequency histogram for all data items ever accessed is prohibitively high,
and PLFU does not adapt to dynamic changes in the distribution~\cite{WLFU,LFUAGING,LFUDA}. Consequently, several alternatives have been suggested.

In-Memory LFU only maintains the access frequency of data items already in the cache, and always inserts the most recently accessed
item into the cache, evicting, if needed, the least frequently accessed item among those that are in the cache~\cite{SurveyOfCacheReplecmentStrategies}.
In-Memory LFU is usually maintained using a heap.
The time complexity of managing LFU heaps was thought to be $O(\log(N))$ until recently, when an $O(1)$ construction was shown in~\cite{LFUIMPl}.
Yet, even with this improvement, In-Memory LFU still suffers from slow reaction to changes in the frequency distribution,
and its performance lags considerably compared to PLFU in static distributions, since it does not maintain
any frequency statistic for items that are no longer in the cache~\cite{SurveyOfCacheReplecmentStrategies}.

\emph{Aging} was introduced in~\cite{LFUAGING} to improve the ability of In-Memory LFU to react to changes.
It is obtained by limiting the maximum frequency count of cached items as well as occasionally dividing the frequency count of cached items by a given factor. Determining when to divide the counters and by how much is tricky and requires fine tuning~\cite{LFUDA}.

As mentioned above, WLFU only maintains the access frequency for a window of the last $W$ requests~\cite{WLFU}.
In order to maintain this window, the mechanism needs to keep track of the order of the requests, which adds an overhead.
Yet, WLFU adapts much better to dynamic changes in the access distribution than PLFU does.

ARC~\cite{ARC,ARCpatent} combines recency and frequency by maintaining meta data in two LRU lists.
The first reflects items that were only accessed once, while the second corresponds to items that were accessed at least twice in the recent history.
ARC adjusts itself to the characteristics of the workload by balancing its cache content source from both lists according to their hit-ratio.
In order to do it, ARC monitors access data for items that were recently evicted from the cache.
If items evicted from the LRU list are hit again, ARC extends this list and behaves more like LRU.
If an item that was recently evicted from the second LFU list is hit, ARC extends the size of the LFU list and behaves more similar to LFU.
The technique of monitoring items that were recently evicted from the cache is called \emph{ghost entries} and has become very common in state of the art cache policies.

\emph{LIRS}~\cite{LIRS} (\emph{Low Inter-reference Recency Set}) is a page replacement algorithm that attempts to directly predict the next occurrence of an item using a metric named \emph{reuse distance}.
To do so, LIRS monitors previous arrival times of many past items both in cache and ones that were evicted from the cache.
Thus, LIRS also maintains a large number of ghost entries.

The \emph{Segmented LRU (SLRU)}~\cite{SLRU} policy captures recent popularity by distinguishing between temporally popular items that are accessed at least twice in a short window vs. items accessed only once during that period.
This distinction is done by managing the cache through two fixed size LRU segments, a \emph{probation} segment (A1) and a \emph{protected} segment (A2).
New items are always inserted into the probation segment.
When an item already in the probation segment is being accessed again, it is moved into the protected segment.
If the probation segment becomes full, an insertion to it results in evicting (and forgetting) one of its items according to the LRU policy.
Similarly, any insertion to a full protected segment results in evicting its LRU victim.
However, in this case, the victim is inserted back into the probation segment rather than being completely forgotten.
This way, temporally popular items that reach the protected segment will stay longer in the cache than less popular items.

2Q~\cite{2Q} is a page replacement policy for operating systems, which manages two FIFO queues, A1 and A2.
A1 is split into two sub-queues, \texttt{A1in} (\emph{resident}) and \texttt{A1out} (\emph{non-resident}).
When an item is first accessed, it is entered into \texttt{A1in} and stays there until it is evicted under the FIFO policy.
However, when an item is being evicted from either \texttt{A1in} or A2, it is inserted into \texttt{A1out}.
If an accessed item is neither in \texttt{A1in} nor in A2, but it appears in \texttt{A1out}, then it is inserted into A2.

LRU-K~\cite{LRUK} also combines ideas from LRU and LFU.
In this policy, the last K occurrences of each element are remembered.
Using this data, LRU-K statistically estimates the momentary frequency of items in order to keep the most frequent pages in memory.

Web caching schemes often take into account also the size of objects.
For example, SIZE~\cite{SIZE} offers to simply remove the largest item first, while same size items are ordered by LRU.
LRU-SP~\cite{LruSP} weighs both the size and the frequency of an item when picking a cache victim.

GDSF~\cite{GDSF} is a hybrid Web caching policy that combines in its decisions the recency of last accesses, the cost of bringing the object to the cache, the object size, and its frequency.
It is an improvement of the Greedy Dual algorithm~\cite{GD} that only factors size and cost. 
In~\cite{CORBA}, latency is also taken into account in workloads when the data is hierarchical, i.e., in order to fetch a child item one must first fetch its predecessor item.

A related approach to ours is introduced in~\cite{AdaptiveCacheReplacement}, which suggests to augment known caching algorithms using a \emph{Hot List}.
This list indicates what the most popular items are using some decay mechanism.
Items in the hot-list are given priority over other items in the eviction policy.
However, the decision whether to evict an item in the cache does not depend on the relative frequency of this item vs. the frequency of the currently accessed item.
Further, an explicit list of $n$ items must be maintained.
This method was shown in~\cite{AdaptiveCacheReplacement} to somewhat improve the hit-ratio of various caching suggestions at the cost of significant meta-data overhead.

In summary, TinyLFU relates to the above suggestions by being a mechanism that can be used to augment other caching policies by enhancing them with approximate statistics on a large history while providing both quick access time and low meta-data overhead.

\subsection{Approximate Counting Architectures}

Approximate counting techniques are widely employed in many networking applications. Approximate counting was originally developed
in order to maintain a per stream network statistics. These constructions can be appealing to caching as well since they offer very
fast updates and consume a compact size.

Sampling methods~\cite{Sample1,Sample2,Sample3} offer a small memory footprint but require explicit representation of keys.
Also, they usually encounter relatively large error bounds. We therefore chose not to use them, as the size of the keys in our context is a
significant part of the overall costs.

Other methods such as Counter Braids~\cite{CounterBraids} reduce the meta data size but require a long decode operation and are therefore not applicable to caching.
Another approach is to compress the counter representation itself~\cite{SAC,DISCO,CEDAR,ICE-Buckets}.
In TinyLFU we manage to represent the histogram using short counters, and thus these methods do not help us.

Multi hash sketches such as \emph{Spectral Bloom Filters} (SBF)~\cite{SpectralBloom} and the \emph{Count Min Sketch} (CM-Sketch)~\cite{CountMinSketch} are appealing in our context.
As these sketches implicitly associate keys and counters, there is no need to store keys in the frequency histogram. Yet, they are not optimal for our case.
In particular, SBF includes a complex implementation aimed at achieving variable sized counters. Such a complex implementation is not needed in our case,
since we only require small counters.
The CM-Sketch on the other hand, offers a simple implementation, but is relatively inaccurate and therefore takes more space than \emph{Counting Bloom Filter}~\cite{CountingBloom}.

The \emph{Minimal Increment} scheme, also known as \emph{conservative update}, has been proposed as a way to boost the accuracy of Counting Bloom Filters when all the increments are positive~\cite{SpectralBloom,CUSketch,CUAnal}.
This technique was successfully used in several fields~\cite{USAGE1,USAGE2,USAGE3,USAGE4,USAGE5,USAGE6,Networking2LowPass,NLP1,NLP2}.

\subsection{Interesting caching applications}
Data caching can be used as a cloud service~\cite{Cloud1,Cloud2}. MemcacheD~\cite{MemCacheD} allows in memory caching of database queries and the items associated with them and is widely deployed by
many real life services including Facebook and Wikipedia.

Caching is also employed at the network level of data centers inside the routers themselves
using a technique called \emph{in-network caching}~\cite{NetworkCaching1,NetworkCaching2}. This technique allows content to be named explicitly,
and routers in the data center to cache data in order to increase the network capacity. In Peer-to-peer systems, TinyLFU was also used to expedite Kademlia lookup process~\cite{Shades}, effectively increasing capacity and reducing delays.

TinyLFU can be integrated in all the examples above. For in-network caching, TinyLFU is appealing since it requires very small memory overhead and can efficiently be implemented in hardware.
As for MemcacheD and cloud cache services, the eviction policy of~\cite{MemCacheD,Cloud2} is a variant of LRU with no admission policy.
As we show in this paper, adding a TinyLFU based admission policy to an LRU eviction policy greatly boosts its performance.

%


\section{TinyLFU Architecture}
\label{sec:tinylfu}

\subsection{TinyLFU Overview}
TinyLFU's architecture is illustrated in Figure~\ref{fig:architecture}.
Here, the cache eviction policy picks a cache victim, while TinyLFU decides if replacing the cache victim with the new item is expected to increase the hit-ratio.

To do so, TinyLFU maintains statistics of items frequency over a sizable recent history. Storing these statistics is considered prohibitively expensive for practical implementation and therefore TinyLFU
approximates them in a highly efficient manner. In the following, we describe the techniques we employ for TinyLFU,
some of which are adaptations of known \emph{sketching} techniques for approximate counting whereas others are novel ideas created especially for the context of caching.

\begin{figure}[t]
\center{
\includegraphics[scale=0.45]{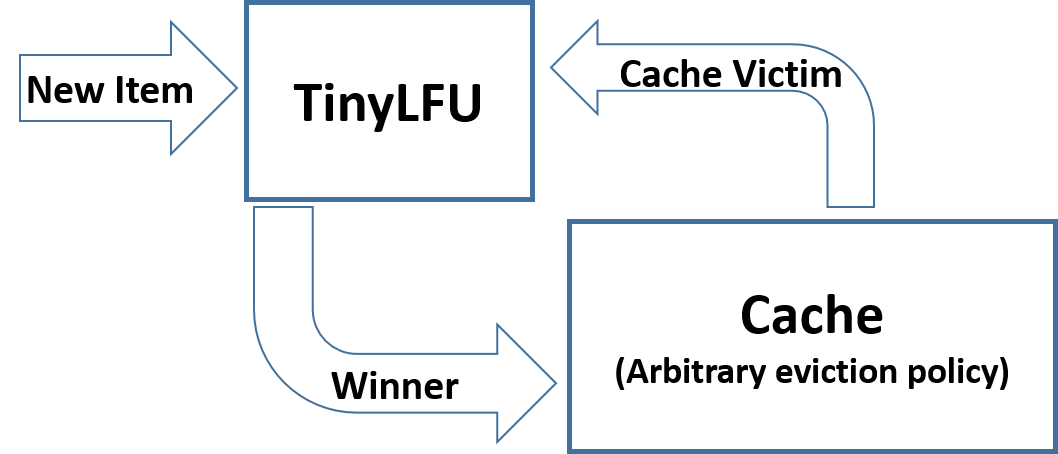}
}
\caption{A general cache augmented with TinyLFU}
\label{fig:architecture}
\end{figure}

Let us emphasize that 
we face two main challenges.
The first is to maintain a freshness mechanism in order to keep the history recent and remove old events.
The second is the memory consumption overhead that should significantly be improved in order to be considered a practical cache management technique.

\subsection{Approximate Counting Overview}

A \emph{counting Bloom filter} is a \emph{Bloom filter} in which each entry in the vector is a counter rather than a single bit.
Hence, rather than setting bits at the indexes corresponding to the filter's hash functions, these entries are incremented on an insert/add operation.

\begin{figure}[t]
\center{
\includegraphics[scale=0.4]{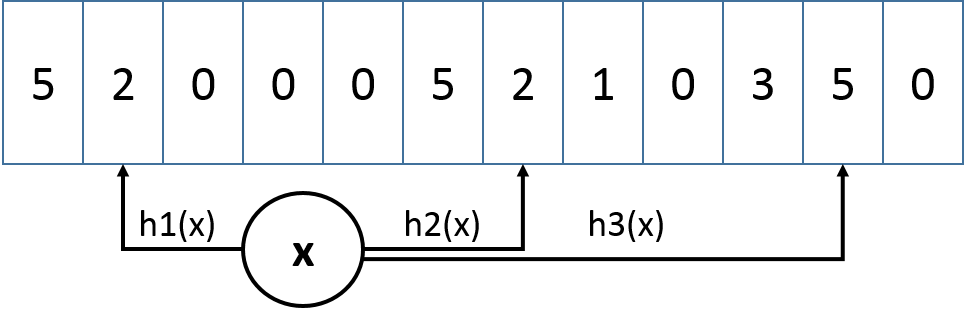}
}
\caption{A counting Bloom filter example}
\label{fig:MICBF}
\end{figure}

A \emph{Minimal Increment CBF} is an augmented counting Bloom filter that supports two methods: \texttt{Add} and \texttt{Estimate}.
The \texttt{Estimate} method is performed by calculating $k$ different hash values for the key.
Each hash value is treated as an index, and the counter at that index is read.
The minimal value of these counters is the returned value.
The \texttt{Add} method also calculates $k$ different hash values for the key.
However, it reads all $k$ counters and only increments the minimal counters.
For example, suppose we use 3 hash functions as illustrated in Figure~\ref{fig:MICBF}.
Upon item arrival, 3 counters are read.
Assuming we read $\{2,2,5\}$, the \texttt{Add} operation increments only the two left counters from 2 to 3 while the third counter remains untouched.
Intuitively, this \texttt{Add} operation prevents unnecessary increments to large counters and yields a better estimation for high frequency items, as their counters are less likely to be incremented by the majority of low frequency items.
It does not support decrements but is shown to empirically reduce the error for high frequency counts~\cite{SpectralBloom}.

\paragraph*{Other Approximate Counting Schemes:}
As mentioned above, CM-Sketch is another popular approximate counting scheme~\cite{CountMinSketch}.
It provides a somewhat lower accuracy per space tradeoff.
However, it is largely believed to be faster.
The space optimizations and aging mechanism we describe below can be applied to CM-Sketch just the same.
The description of TinyLFU below is oblivious to the choice between Counting Bloom Filter and CM-Sketch.



\subsection{Freshness Mechanism}
To the best of our knowledge, keeping approximation sketches fresh has only been studied in~\cite{SUNSHINE}, where a sliding sample is obtained by maintaining an ordered list of $m$ different sketches.
New items are inserted to the first sketch, and after a constant amount of insertions the last sketch is cleared and is moved to the head of the list.
This way, old events are forgotten. 

The method of~\cite{SUNSHINE} is not very appealing as a frequency histogram for two reasons:
First, in order to evaluate the frequency of an item, $m$ distinct approximation sketches are read, resulting in many memory accesses and hash calculations.
Second, this method increases the space consumption, since we need to store the same items in $m$ different sketches and each item is allocated more counters.

Instead, we propose a novel method for keeping the sketch fresh, the \emph{reset} method described below.
The reset operation is simple. Every time we add an item to the approximation sketch, we increment a counter.
Once this counter reaches the sample size ($W$), we divide it and all other counters in the approximation sketch by 2.
This division has two interesting merits. First, it does not require much extra space as its only added memory cost is a single counter of $Log(W)$ bits.
Second, this method increases the accuracy of high frequency items as we show both analytically and experimentally. Since the accuracy of an approximation
sketch can always be increased by using more space, we show that the reset method in fact reduces the total space cost since we get a significantly more
accurate sketch for the same space.

The downside of this operation is an expensive infrequent operation that goes over all the counters in the approximation scheme and divides them by 2.Yet, division by 2 can be implemented efficiently in hardware using shift registers.
Similarly, in software, shift and mask operations allow for performing this operation for multiple (small) counter at once. Finally, its amortized complexity is constant making it feasible for many applications.

In the following subsections, we prove the correctness of the reset operation and evaluate it's truncation error. The truncation error is caused since we use an integer division and therefor a counter that shows 3 will be reset to 1 and not to 1.5.
\subsubsection{Reset Correctness}
In this section, we analyze TinyLFU operations, under the assumption that the counting infrastructure works accurately. In the result section, we explore how much space is required to keep the performance identical to that of an accurate counting technique.

\begin{definition}
Denote $W$ the sample size, $f_i$ the frequency of item $i$, and $h_i$ the estimation of $i$ in TinyLFU.
\end{definition}
\begin{lemma}
Under constant distribution, at the end of each sample (immediately before each reset operation), the expected frequency of $i$ in the histogram is $E(h_i) =  f_i \cdot W$.
\end{lemma}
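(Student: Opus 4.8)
The plan is to treat $h_i$ as a real-valued random process driven by the access stream, and to exploit the assumption that the counting infrastructure is accurate: under that assumption $h_i$ is incremented by exactly $1$ whenever item $i$ is accessed, and is multiplied by exactly $1/2$ at every reset (the integer truncation is deliberately ignored here and handled in the next subsection). Because the distribution is constant, each access is to item $i$ with marginal probability $f_i$, so by linearity of expectation the expected contribution of a block of $m$ consecutive accesses to $h_i$ is $m f_i$; note that independence of accesses is not even required, only the marginal probabilities. The one structural fact about the reset mechanism that I need is the number of accesses that fall in each sample, which I would establish first.

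The key bookkeeping step is to count accesses per sample. The global add-counter starts at $0$ and triggers a reset upon reaching $W$, so the first sample contains exactly $W$ accesses. Immediately after a reset this counter is itself halved to $W/2$, hence every subsequent sample contains exactly $W/2$ accesses, namely those carrying it from $W/2$ back up to $W$. This asymmetry between the first sample and all later ones is the crux of the argument: it is precisely what makes the target value land on $f_i W$, whereas naively assuming $W$ accesses in every sample would yield the wrong constant $2 f_i W$.

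With this in hand I would prove $E(h_i) = f_i W$ immediately before every reset by induction on the sample index. For the base case, $h_i$ starts at $0$ and accumulates $W$ increments, each equal to $1$ with probability $f_i$, during the first sample, so $E(h_i) = f_i W$ by linearity. For the inductive step, assume $E(h_i) = f_i W$ just before the current reset; since halving commutes with expectation, right after the reset $E(h_i) = f_i W / 2$, and the ensuing sample contributes $W/2$ further increments of expected size $f_i$ each, restoring $E(h_i) = f_i W/2 + f_i (W/2) = f_i W$ before the next reset.

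Equivalently, one can bypass the induction by viewing $h_i$ in steady state as a decaying sum of past-sample contributions: a sample that is $k$ resets old has been halved $k$ times and contributes expected weight $f_i (W/2)\, 2^{-k}$, and summing the geometric series $\sum_{k \ge 0} 2^{-k} = 2$ recovers $f_i W$. I expect the only real obstacle to be conceptual rather than computational, namely justifying that expectation may be passed through the halving operation (legitimate here exactly because the ``accurate counting'' assumption lets us treat $h_i$ as an exact real-valued counter) together with correctly handling the $W$-versus-$W/2$ accounting identified above; everything else reduces to linearity of expectation.
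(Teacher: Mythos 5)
Your proof is correct and follows essentially the same route as the paper's: induction on the number of resets, with $E(h_i)=f_iW$ in the base sample of $W$ accesses, halving of the expectation at each reset, and $W/2$ further expected increments of $f_i$ each restoring $f_iW$ before the next reset. You are in fact slightly more careful than the paper, since you explicitly justify the $W$-versus-$W/2$ per-sample accounting (the global add-counter being halved along with everything else), which the paper's step merely asserts, and your geometric-series steady-state view is a consistent alternative check.
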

\begin{proof}
By induction on the number of reset operations performed $r$.

Base: For $r=0$, the condition holds trivially. We sample $W$ items one after another under constant distribution. By definition, item $i$ has a frequency of $f_i$. Therefore, the expected height of the histogram is $E(h_i)=f_i \cdot W$.

Step: Assume correctness for $r<j$ and prove it for $r=j$. From the induction hypothesis, until the $j-1$ reset operation occurs, $E(h_i) = f_i\cdot W$. Therefore,
immediately after the $j-1$ reset operation, $E(h_i) = \frac{f_i\cdot w}{2}$. There are exactly $\frac{W}{2}$ samples until the next reset operation and therefore $E(h_i)$
is expected to be incremented $\frac{f_i \cdot W}{2}$ times. Since the expectation is additive we conclude that right before the $j'th$ reset operation: $E(h_i) = f_i \cdot W$.
\end{proof}
The above lemma proves that TinyLFU converges to the right frequency under constant distribution.
To show that TinyLFU adjusts to changes, we prove a slightly stronger lemma, that says that under constant distribution eventually TinyLFU converges to the right frequency regardless of its initial value.

\begin{lemma}
	Under constant distribution, eventually $E(h_i)= f_i \cdot W$.
\end{lemma}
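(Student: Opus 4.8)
The plan is to track the expected counter value only at the reset epochs and to show that this sequence obeys a contracting linear recurrence whose unique fixed point is $f_i \cdot W$. Concretely, I would let $E_r$ denote the expectation of $h_i$ immediately before the $r$-th reset, except that now $h_i$ is allowed to hold an \emph{arbitrary} initial value at the start of the process; this is the only difference from the setting of the previous lemma, whose base case implicitly assumed an empty sketch. The goal is then to prove $E_r \to f_i \cdot W$ as $r \to \infty$ regardless of that starting value.

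First I would pin down the length of each sampling period. After the very first reset the sample counter is itself halved from $W$ down to $W/2$, so every subsequent period consists of exactly $W/2$ \texttt{Add} operations before the counter climbs back to $W$ and triggers the next reset. This length is deterministic, since the sample counter advances by one on every \texttt{Add} independently of which item is accessed. Hence, under a constant distribution, the expected number of increments that land on counter $i$ during a single period is exactly $f_i \cdot (W/2)$ by linearity of expectation over the $W/2$ accesses.

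Next I would write the recurrence. Because each reset halves the counter and the ensuing period adds $f_i \cdot W/2$ in expectation, for all $r \ge 2$ we obtain
\[
E_r = \tfrac{1}{2} E_{r-1} + f_i \cdot \tfrac{W}{2}.
\]
The fixed point of this map is $f_i \cdot W$, matching the steady-state value of the previous lemma. Setting $\epsilon_r = E_r - f_i \cdot W$ and subtracting the fixed point gives $\epsilon_r = \tfrac{1}{2}\epsilon_{r-1}$, so $\epsilon_r = (1/2)^{r-1}\epsilon_1 \to 0$. Therefore $E_r \to f_i \cdot W$ irrespective of the initial contents of the sketch, which is exactly the claim; as a bonus the convergence is geometric, so any stale initial state is forgotten quickly.

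The main obstacle is not the algebra but justifying that the reset dynamics genuinely reduce to this scalar linear recurrence. I would need the two facts that (i) halving acts linearly on the expectation of counter $i$ in isolation, so that $E(h_i/2) = E(h_i)/2$ independently of the other counters, and (ii) the reset epochs sit at deterministic positions, so the per-period increment expectation is the \emph{constant} $f_i \cdot W/2$ rather than itself a random quantity. I would also state explicitly that the argument runs under the exact-division (accurate counting) model used throughout this subsection; the integer-truncation discrepancy is handled separately in the truncation-error analysis and does not affect the qualitative convergence established here.
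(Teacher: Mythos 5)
Your proof is correct and takes essentially the same approach as the paper: the paper likewise writes the pre-reset value as $f_i\cdot W + \sigma$ and shows the error halves with each reset period ($\sigma/2^k \to 0$), which is exactly your contracting recurrence $\epsilon_r = \tfrac{1}{2}\epsilon_{r-1}$ after subtracting the fixed point. Your explicit justifications --- linearity of expectation under halving, the deterministic $W/2$-length periods, and the restriction to the exact-division model --- merely spell out steps the paper leaves implicit.
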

\begin{proof}
	Consider the initial value of $h_i$ right before a reset operation.
    We write this value as $f_i\cdot W + \sigma$ where $\sigma$ is the error.
    For example, if $h_i= 10$ but $f_i\cdot W = 8$, then $\sigma = 2$.
    After performing a reset operation, $h_i$ will be: $h_i = \frac{f_i\cdot W}{2} +\frac{\sigma}{2}$.
    The next reset operation is scheduled to happen within $\frac{W}{2}$ events and on average $h_i$ before the next reset operation will increase by: $\frac{f_i\cdot W}{2}$.
    It's new value will therefore be $h_i = \frac{f_i\cdot W}{2} +\frac{\sigma}{2}+  \frac{f_i\cdot W}{2}  = f_i\cdot W +\frac{\sigma}{2}$.
	We can repeat this process $k$ times and get that after $k$ more samples the value of $h_i$ is going to be:
	$h_i = \frac{f_i\cdot W}{2} +\frac{\sigma}{2^k}+  \frac{f_i\cdot W}{2}  = f_i\cdot W +\frac{\sigma}{2^k}$
	Therefore, if we take $k$ to infinity we get that:
	$${\lim _{k \to \infty }} (f_i\cdot W +\frac{\sigma}{2^k}) = f_i\cdot W$$
	In practice, since we use integer division, the initial error is eliminated after $log_2(\sigma)$ samples.
    However, in that case, there is also a truncation error as detailed below.
\end{proof}

\subsubsection{Reset Truncation Error}

Since our reset operation uses integer division, it introduces truncation error.
That is, after a reset operation, the value of a counter can be as much as $0.5$ lower than that of a floating point counter.
If we have to reset again, after the reset the truncation error of the previous reset operation is divided to $0.25$, but we accumulated a new truncation error of $0.5$ resulting in a total error of $0.75$.
It is easy to see that the worst case truncation error converges to at most one point lower than the accurate rate of the item.
Therefore, the truncation error affects the recorded occurrence rate of an item by as much as $\frac{2}{W}$ right after a reset operation.
This means that the larger the sample size, the smaller the truncation error.

\subsection{Space Reduction}
Our space reduction is obtained over two separate axes: First, we reduce the size of each of the counters in the approximation sketch.
Second, we reduce the total number of counters allocated by the sketch. These space saving optimizations are detailed below.

\subsubsection{Small Counters}
Naively implementing an approximation sketch requires using long counters.
If a sketch holds $W$ unique requests, it is required to allow counters to count until $W$ (since in principle an item could
be accessed $W$ times in such a sample), resulting in $Log(W)$ bits per counter, which can be prohibitively costly.
Luckily, the combination of the reset operation and the following observation significantly reduces the size of the counters.

Specifically, a frequency histogram only needs to know whether a potential cache replacement victim that is already in the cache is more popular than the item currently being accessed. However, the frequency histogram need not determine the exact ordering
between all items in the cache.
Moreover, for a cache of size $C$, all items whose access frequency is above $1/C$ belong in the cache (under the reasonable assumption
that the total number of items being accessed is larger than $C$). Hence, for a given sample size $W$, we can safely cap the counters by $W/C$.

Notice that this optimization is possible since our ``aging'' mechanism is based on the reset operation rather than a sliding window.
With a sliding window, in an access pattern in which some item $i$ alternates between $W/C$ consecutive accesses followed by $W/C+1$ accesses to other items, it could happen that $i$ would be evicted as soon as the sliding window shifts beyond the least recent $W/C$ accesses to $i$ even though $i$ is the most frequently accessed item in the cache.

To get a feel for counter sizes, when $W/C=8$, the counters require only 3 bits each, as the sample is 8 times larger than the cache itself.
For comparison, if we consider a small 2K items cache with a sample size of 16K items and we do not employ the small counters optimization, then the required counter size is 14 bits.

\subsubsection{Doorkeeper}

In many workloads, and especially in heavy tailed workloads, unpopular items account for a considerable portion of all accesses.
This phenomenon implies that if we count how many times each unique item in the sample appeared, the majority of the counters are assigned to items that are not likely to appear more than once inside the sample.
Hence, to further reduce the size of our counters, we have developed the \emph{Doorkeeper} mechanism that enables us to avoid allocating multiple-bits
counters to most tail objects.

The Doorkeeper is a regular Bloom filter placed in front of the approximate counting scheme.
Upon item arrival, we first check if the item is contained in the Doorkeeper.
If it is not contained in the Doorkeeper (as is expected with first timers and tail items), the item is inserted to the Doorkeeper and otherwise, it is inserted to the main structure.
When querying items, we use both the Doorkeeper and the main structures.
That is, if the item is included in the Doorkeeper, TinyLFU estimates the frequency of this item as its estimation in the main structure plus 1.
Otherwise, TinyLFU returns just the estimation from the main structure.

When performing a reset operation, we clear the Doorkeeper in addition to halving all counters in the main structure.
Doing so enables us to remove all the information about first timers.
Unfortunately, clearing the Doorkeeper also lowers the estimation of every item by 1, which also increases the truncation error by 1.

Memory wise, although the Doorkeeper requires additional space, it allows the main structure to be smaller since it limits the amount of unique items that are inserted to the main structure.
In particular, most tail items are only allocated 1 bit counters (in the Doorkeeper).
Hence, in many skewed workloads, this optimization significantly reduces the memory consumption of TinyLFU.
TinyLFU and the Doorkeeper are illustrated in Figure~\ref{fig:tinyLFU}.
We note that a similar technique was previously suggested in the context of network security~\cite{DoorKeeper}.

\begin{figure}[t]
\center{
\includegraphics[scale=0.7]{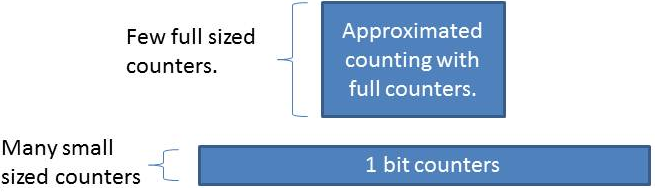}
}
\caption{TinyLFU structure}
\label{fig:tinyLFU}
\end{figure}

\subsection{Test Case: TinyLFU vs. a Strawman}
To motivate the design of TinyLFU, we compare it to a strawman.
The strawman is equivalent to building a frequency histogram using only existing approximate counting suggestions.
In order for the strawman to keep items fresh, it uses the sliding window approximation proposed in~\cite{SUNSHINE}.
That is, the strawman uses 10 different approximate counting sketches in order to mimic a sliding window.
Moreover, the strawman does not have a Doorkeeper or a cap on its counters and is therefore required to allow its counters to grow even to the maximal window size.

Our test case is a 1K items cache augmented by a 9K items frequency histogram.
For this workload, TinyLFU requires its counters to count up to 9.
This counting is obtained using 3 bits full counters that can count up to 8, in addition to the Doorkeeper that can count up to 1.
The strawman uses 10 approximate counting sketches of 900 items each, and its counters are of size 10 bits. 
In this example, we consider a Zipf 0.9 distribution, which is characteristics of many interesting real-world workloads, e.g.,~\cite{BCFPS99,ZipfCaching,ZipfYouTube}.

We summarize the storage requirements of both frequency histograms in Figure~\ref{fig:strawman}.
As can be observed, in this workload, TinyLFU is required to store approximately 10\% fewer unique values due to the fact that it uses a single big sketch instead of 10 small sketches.
Moreover, for this experiment the vast majority of items consumed only a small counter of 1 bit.
The frequent items that appeared more than once in this sample were allocated an additional 3 bits counters due to the small counters optimization.
In total, for this workload, we notice that TinyLFU reduces the memory consumption of the strawman by $\approx$89\%.

\begin{figure}[t]
\begin{center}
\scriptsize
\tabcolsep=0.11cm
\begin{tabular}{|c||c|c|c|c|c|}
\hline
 & \#Unique Items& \#2nd Timers&Full Size&Small Size&Average Size (bits)\tabularnewline
\hline
TinyLFU  & 7239  & 416 & 3  & 1 & 1.22\tabularnewline
\hline
Strawman & 8020  & -  & 10  & - & 10 \tabularnewline
\hline
\end{tabular}
\end{center}
\caption{ TinyLFU  vs unoptimized approximate frequency histogram}
\label{fig:strawman}
\end{figure}

\subsection{Connecting TinyLFU to Caches}
Connecting TinyLFU to both LRU and Random caches while treating the cache as a black box is trivial.
However, since TinyLFU uses a reset that halves the frequencies of items, we were forced to modify the implementation of the LFU cache in order to properly connect it with TinyLFU.
In particular, we had to synchronize the reset operation of TinyLFU with the cache and reset the frequencies of cached items as well.



\section{The W-TinyLFU Optimization}
\label{sec:caffeine}
TinyLFU was integrated into the Caffeine Java high performance caching library~\cite{CaffeineProject} that is available in open source.
During extensive benchmarking performed with this library, we have discovered that while TinyLFU performs well on traces originating from Internet services and artificial Zipf-like traces, there are a few workloads in which TinyLFU did not perform so well compared to state of the art caching policies.
This occurred mainly with traces that include ``sparse bursts'' to the same object, as is common in storage servers.
That is, in these cases, items belonging to new bursts do not manage to build enough frequency to remain in the cache before being evicted, causing repeated misses.

This problem was remedied in our Caffeine integration by devising a policy called \emph{Window Tiny LFU (W-TinyLFU)}, which consists of two cache areas.
The \emph{main cache} employs the SLRU eviction policy and TinyLFU admission policy while the \emph{window cache} employs an LRU eviction policy without any admission policy.
The A1 and A2 regions of the SLRU policy in the main cache are statically divided so that 80\% of the space is allocated to hot items (A2) and the victim is picked from the 20\% non hot items (A1).

Any arriving item is always admitted to the window cache and the victim of the window cache is given a chance to be admitted to the main cache.
If it is admitted, then the victim of W-TinyLFU is the main cache's victim and otherwise it is the window's cache victim.
The W-TinyLFU scheme is illustrated in Figure~\ref{fig:wtinylfu}.

\begin{figure}[t]
	\center{
		\includegraphics[scale=0.3]{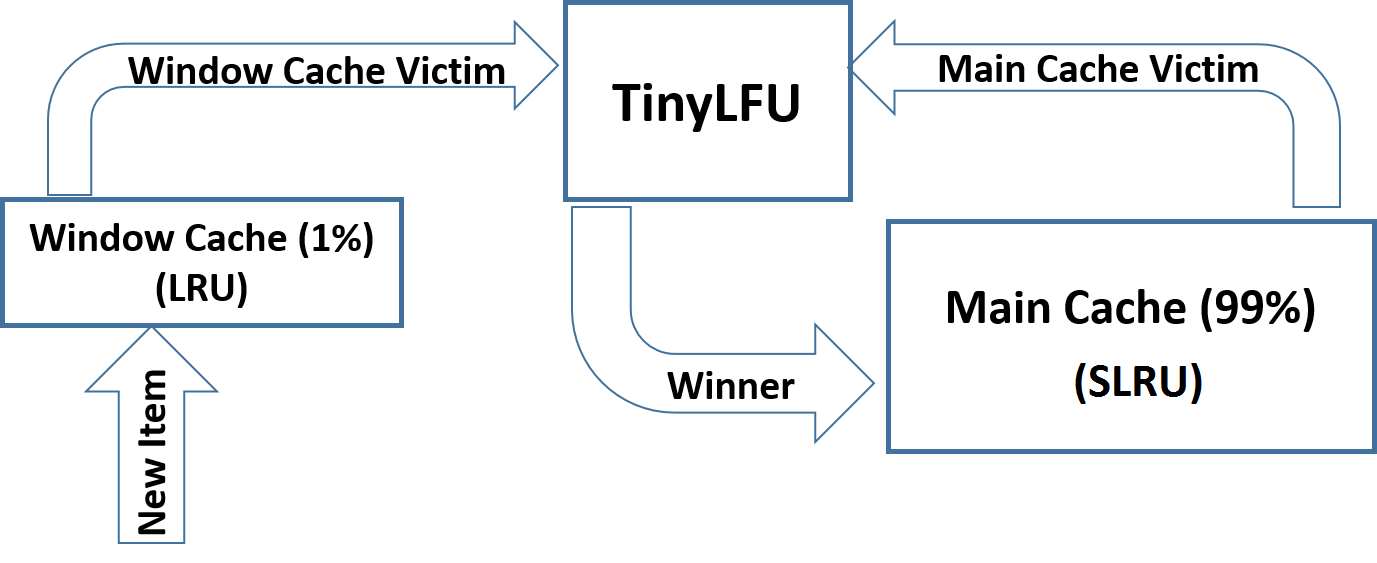}
	}
	\caption{Window TinyLFU scheme}
	\label{fig:wtinylfu}
\end{figure}

In the current release of Caffeine (2.0), the size of the window cache is $1\%$ of the total cache size and that of the main cache is $99\%$.
The motivation behind W-TinyLFU is to have the scheme behave like TinyLFU for LFU workloads while still be able to exploit LRU patterns such as bursts.
Because 99\% of the cache is allocated to the main cache (with TinyLFU), the performance impact on LFU workloads is negligible.
On the other hand, some workloads allow for exploitation of LRU friendly patterns.
In these workloads W-TinyLFU is better than TinyLFU.
As we report in Section~\ref{sec:results} below, for Caffeine's needs, W-TinyLFU is a top alternative for a wider variety of workload and thus the added complexity is justified.

TinyLFU was added into the Caffeine project as a late comer.
Our experience in doing so exposed several qualitative benefits for TinyLFU compared to other state-of-the-art schemes.
As we mentioned above, the latter typically maintain ghost entries, which adds complexity to the implementations.
Further, they involve a higher memory overhead.
The overall space overhead of W-TinyLFU in Caffeine is $8$ bytes per cache entry, which is significantly lower than the overhead ARC and LIRS.
Also, the development time was considerably faster in the case of TinyLFU (and W-TinyLFU) because the policy is less invasive, as there is no need to rework the main eviction scheme to account for ghost entries polluting the other data structure.

%




\section{Experimental Results}
\label{sec:results}

\subsection{Methodology}
\label{sec:method}
In this section, we examine the utility of TinyLFU as a practical cache admission policy.
Our evaluation is divided into three parts:
First, in Section~\ref{sec:simulator}, we examine the impact of augmenting three baseline eviction policies, namely LRU, Random and LFU, with a TinyLFU admission policy.
This part is performed using a Java prototype that can be instantiated with a given cache size (in terms of number of items it can store), a file that contains a trace of accesses, and one of the above cache management schemes.
The prototype then consumes the file, returning for each access whether according to the chosen management scheme and cache size it would have been a hit or a miss as well as overall statistics.
The prototype used Counting Bloom Filter with all optimizations described in Section~\ref{fig:architecture}.

Our second set of evaluations, reported in Section~\ref{sec:realruns}, is performed using the actual Caffeine implementation~\cite{CaffeineProject}.
In the current release of Caffeine (2.0), the TinyLFU histogram is maintained for a sample that is 10 times the cache size.
Also, Caffeine uses CM-Sketch (with our space optimizations and aging mechanism) instead of Counting Bloom Filters since even with CM-Sketch such a large sample only requires 8 bytes per cache item and on the other hand Caffeine's mission is high throughput\footnote{
The total meta-data size is larger since the LRU policy is maintained as a queue with forward and backward pointers adding two pointers for each in-memory item.}.
We have compared the obtained hit-ratios of a large number of cache management policies.
Yet, for brevity, we report only the best performing ones.
In the third set of measurements, reported in Section~\ref{sec:error}, we explore the various types of errors accumulated by TinyLFU's various mechanisms as well as the total error.

For the evaluations, we employ both constant distributions of Zipf 0.7 and Zipf 0.9\footnote{We have experimented with several other skewed distributions and obtained very similar qualitative results.}
and an SPC1-like synthetic trace~\cite{ARC} (denoted \textbf{SPC1-like} below) as well as real traces.
In the synthetic Zipf traces, items are picked according to the corresponding distribution from a set of 1 million objects.
Further, caches are given a long warm-up time (20 times the sample size) and we present them at their highest hit ratio.
The SPC1-like trace contains long sequential scans in addition to random accesses and has a page size of 4 Kbytes~\cite{ARC}.
In the real traces, the caches are not warmed up; since the distribution gradually changes over time, no warmup is necessary.

Our real workloads come from various sources and represent multiple types of applications, as detailed below:
\begin{itemize}
\item
\textbf{YouTube}: A YouTube trace is generated from a YouTube dataset~\cite{YouTubeDataSet}.
Specifically, the trace in~\cite{YouTubeDataSet} includes a weekly summary of the number of accesses to each video rather than a continuous time-line of requests.
Hence, for each reported week, we have calculated the corresponding approximate access distribution, and have generated synthetic accesses that follow this distribution on a week by week basis.

\item
\textbf{Wikipedia}: A Wikipedia trace containing $10\%$ of the traffic to Wikipedia during two months starting in September 2007~\cite{WikiTrace}.

\item
\textbf{DS1}: A data base trace taken from~\cite{ARC}.

\item
\textbf{S3}: A search engine trace taken from~\cite{ARC}.

\item
\textbf{OLTP}: A trace of a file system of an OLTP server~\cite{ARC}.
It is important to note that in a typical OLTP server, most operations are performed on objects already in memory and thus have no direct reflection on disk accesses.
Hence, the majority of disk accesses are the results of writes to a transaction log.
That is, the trace mostly includes ascending lists of sequential blocks accesses sprinkled with a few random accesses due to an occasional write replay or in-memory cache misses.

\item
\textbf{P8} and \textbf{P12}: Two windows server traces from~\cite{ARC}.

\item
\textbf{Glimpse}: A trace of the Glimpse system describing an execution of an analytic query~\cite{LIRS}.
The main characteristic of this trace is an underline loop, along with other types of accesses.

\item
\textbf{F1} and \textbf{F2}: Traces from applications running at two large financial institutions, taken from the Umass trace repository~\cite{UMAS}.
These are fairly similar in structure to the OLTP trace for the same reasons mentioned above.

\item
\textbf{WS1}, \textbf{WS2}, and \textbf{WS3}: Three additional search engine traces, also taken from the UMass repository~\cite{UMAS}.
\end{itemize}

Notice that the DS1, OLTP, P8, P12, and S3 traces were provided by the authors of ARC~\cite{ARC} and represent potential ARC applications.
The SPC1-like synthetic trace is also by the authors of ARC~\cite{ARC}.
Similarly, the Glimpse trace is provided by the authors of LIRS~\cite{LIRS}.




We have also used the YouTube workload to measure the impact of the dynamics of the distribution on the performance.
To do so, instead of playing the entire amount of views in a week, we only played a few samples from each week.
These tests simulate the behavior of the caches when the workload is very dynamic.

In our figures, LRU and Random refer to eviction policies while TLRU and TRandom refers to LRU and Random caches augmented by TinyLFU, respectively.
For LFU cache eviction, we tested two options: WLFU that uses both LFU eviction policy and LFU admission policy implemented using an accurate sliding window.
TLFU is the name we gave an LFU cache augmented with TinyLFU.
Further, ARC represents ARC and LIRS represents LIRS while W-TinyLFU stands for the W-TinyLFU scheme (1\% LRU window cache with and 99\% main cache managed with TinyLFU admission policy and an SLRU eviction policy).
In graphs where we explore different window cache sizes for W-TinyLFU, the window size is written in the parenthesis, e.g., W-TinyLFU(20\%) represents a window cache that consumes 20\% of the total cache size.


\subsection{Results of Augmenting Caches with TinyLFU}
\label{sec:simulator}

Figure~\ref{fig:PerformanceFilter} shows the results of TinyLFU admission policy on the performance of several eviction policies mentioned above under constant skewed distributions.
As can be seen, under constant distributions, all caches that are augmented with TinyLFU behave in a similar way.
Surprisingly, LFU cache eviction yields only a marginal benefit over the TinyLFU augmented LRU and Random techniques.
Let us note that in such skewed distributions, the maximal cache hit-ratio is theoretically bounded regardless of its size.
Intuitively, for a distribution function $f_i$, this bound can be roughly computed by the integral over $\max(0,f_i-1)$ (since the first occurrence is always a miss)
divided by the integral over $f_i$.

We conclude that for constant skewed distributions, the TinyLFU cache admission policy is an attractive enhancement.
While augmenting In-memory LFU yields slightly higher hit-ratio, the overheads of In-memory LFU may justify using a simpler cache eviction policy.
Particularly, LRU and even Random offer low overheads with comparable hit-ratios.


\begin{figure}[t]
\center{
\subfigure[Sample size is 32 times the cache size, Zipf 0.9]{\includegraphics[scale=0.68]{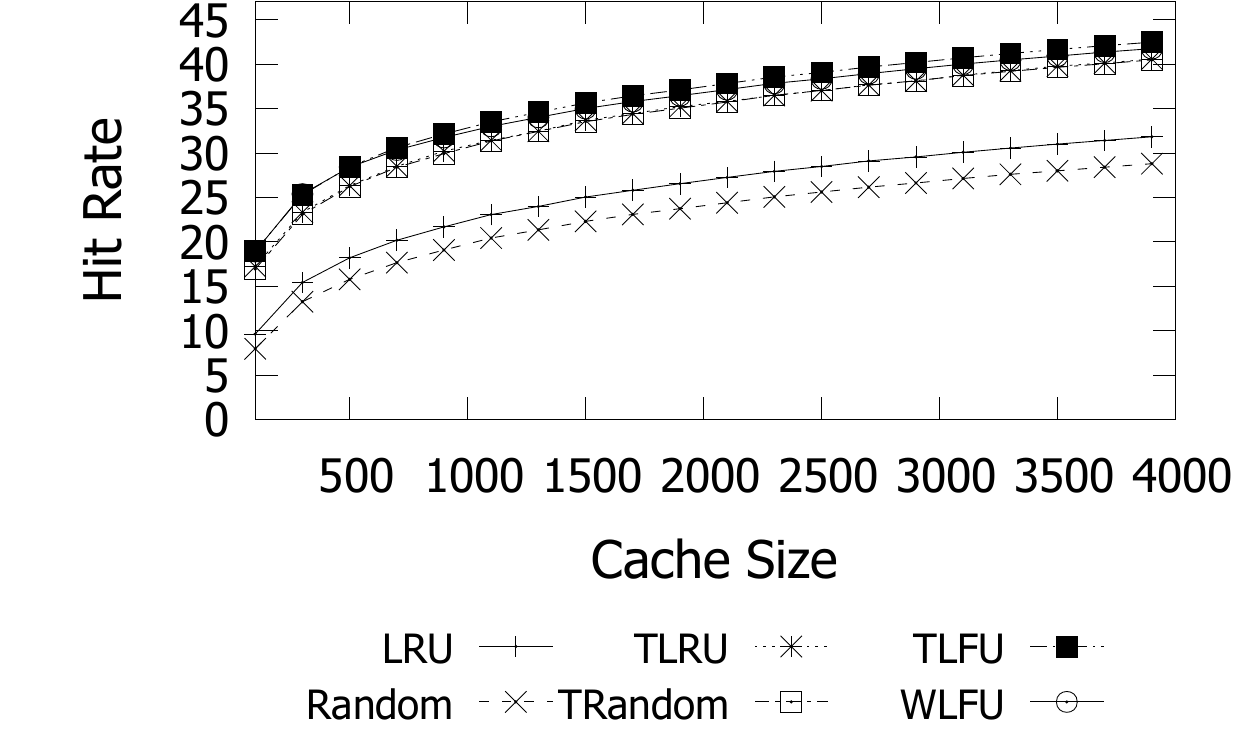}}
\subfigure[Sample size is 32 times the cache size, Zipf 0.7]{\includegraphics[scale=0.68]{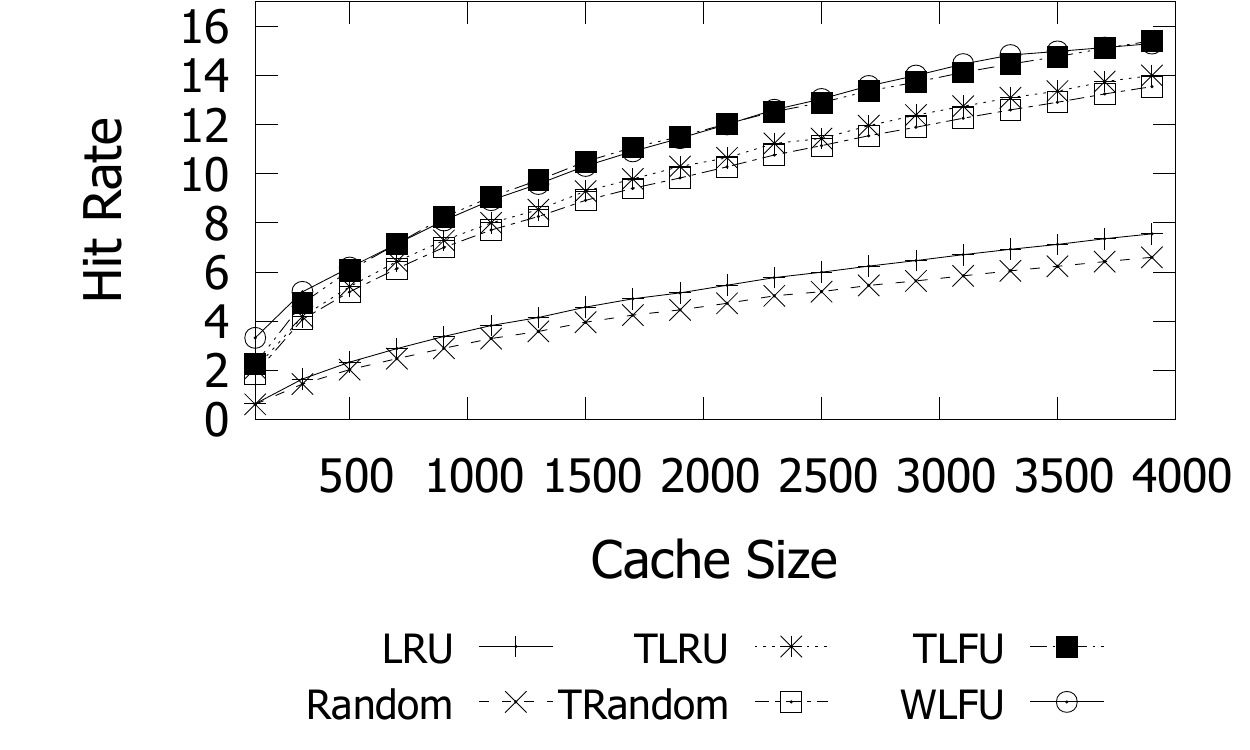}}
}
\caption{Augmenting arbitrary caches with a TinyLFU admission policy}
\label{fig:PerformanceFilter}
\end{figure}

\begin{figure}[t]
\center{

\subfigure[Effect of the change speed on the hit-ratio for 1000 items cache and sample size of 9000]{\includegraphics[scale=0.68]{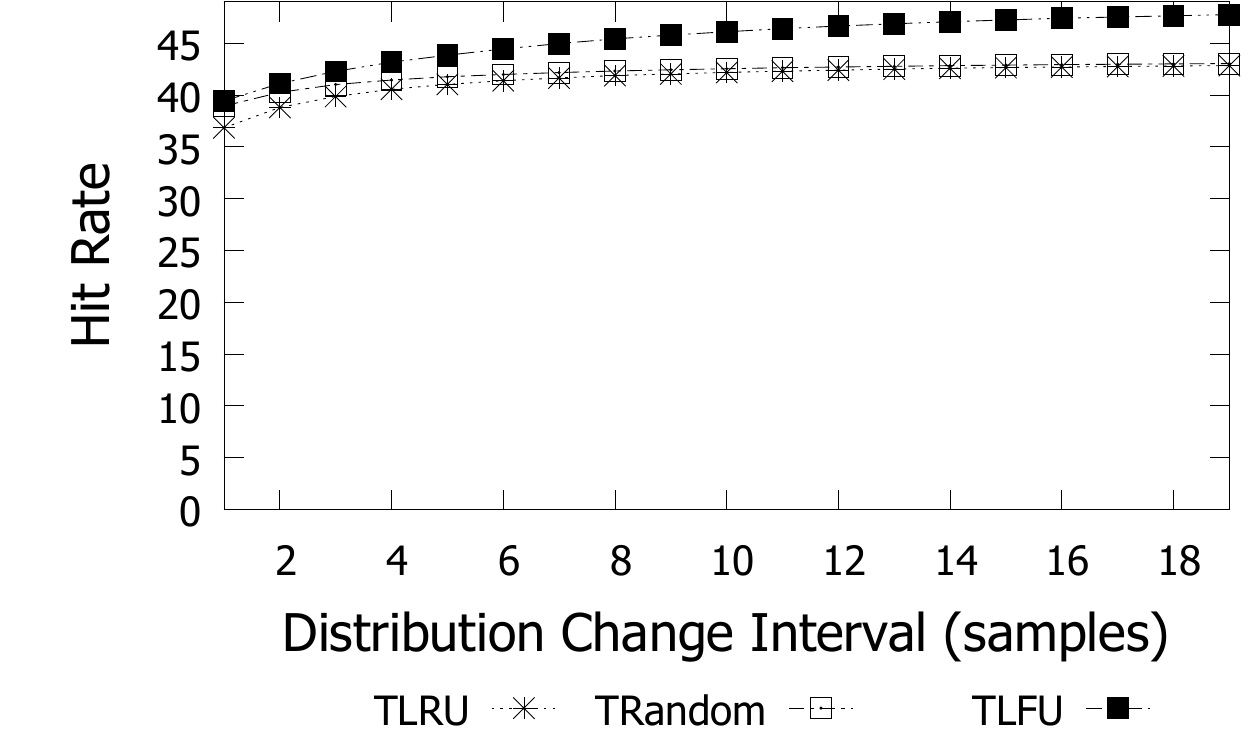}}
\subfigure[Cache size vs. hit-ratio for sample size of 9*CacheSize]{\includegraphics[scale=0.68]{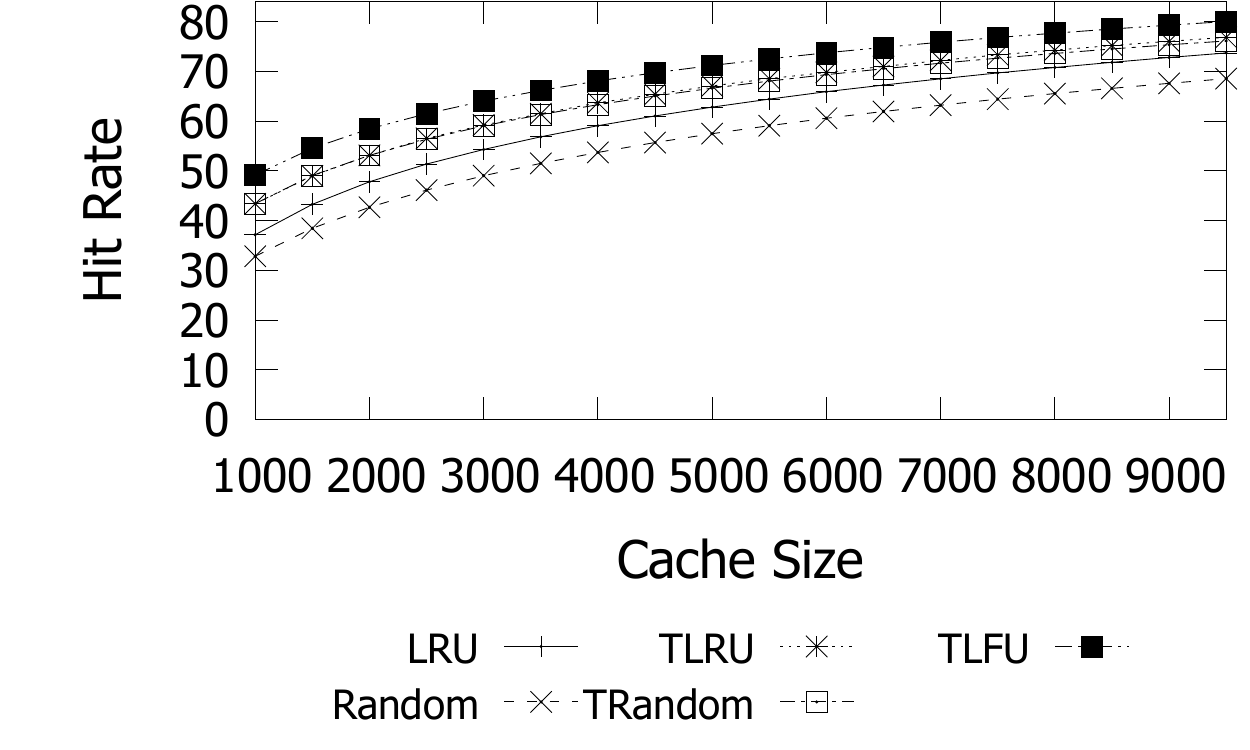}}
}
\caption{YouTube dataset}
\label{fig:YouTube}
\end{figure}

The second experiment we did was testing the augmented caches on a dynamic distribution. To do so, we used a dataset that describes the
 popularity of 161K newly created YouTube videos over 21 weeks starting at Apr. 16th, 2008 ~\cite{YouTubeDataSet}.
We evaluated the approximated frequencies of each of the videos each week and created a distribution that represents each week.
Our experiments therefore swap between these distributions every given amount of requests.

We measured two metrics: The first is how fast can we change the distributions, i.e., what is the effect of the number of samples we perform from each week's distribution
on the hit-ratio of our TinyLFU augmented caches.
The second is the impact of the cache size on the achieved hit-ratio when the distribution change speed is taken form the trace.

The results of this experiment are shown in Figure~\ref{fig:YouTube}. As can be observed, TinyLFU is effective in augmenting arbitrary caches even in dynamic workloads.
Further, the benefit is greater when the distribution changes more slowly, as expected from all LFU caches. Yet, in this workload, the difference
between an augmented Random cache and an augmented LRU cache to a true LFU cache is more significant.
Therefore, picking the correct cache victim seems to be more important in dynamic workloads than in static workloads.

The third measurement we made was running the experiment on the Wikipedia access trace.
We first studied the required ratio between sample size and cache size on samples of 100 million consecutive requests from different points in the trace.
Second, we used the best ratio we found and tested it on different cache sizes. These results are shown in Figure~\ref{fig:Wiki}. Unlike static workloads, real life workloads
gradually change over time. Therefore, using a very large sample can even reduce the obtained hit ratio as it slows the pace by which the cache adjusts to the workload.

We note that although WLFU and TLFU achieved nearly identical hit ratio, the main difference between WLFU and TLFU is in their meta-data costs.
For example, in the YouTube workload we used only $0.57$ bytes per sample element.
Since the TinyLFU statistics is maintained for a sample $9$ times the size of the cache, the total meta-data cost of TinyLFU is $9 \cdot 0.57 = 5.13$ bytes per cache entry.
If we consider that each cache entry should contain a video ID that requires 11 bytes, we conclude that TinyLFU is able to approximately remember a history 9 times bigger than the cache
with less space overhead than what is required to store just the keys of all cached items.

For comparison, WLFU is required to remember an explicit history 9 times bigger than the cache content.
Maintaining this history is expected to cost, even in the most space efficient implementation, 99 bytes per cache entry.
In addition, to operate quickly, it is required to maintain an explicit summary of these items, since iterating over the window and counting the frequency of cached items and replacement candidates is very slow.
Even if we neglect this additional memory overheads, WLFU's admission policy still requires almost 20 times more space than TinyLFU.

\begin{figure}[t]
\center{
\subfigure[\label{fig:windowSize}Sample size vs. hit-ratio for a 1000 items cache]{\includegraphics[scale=0.68]{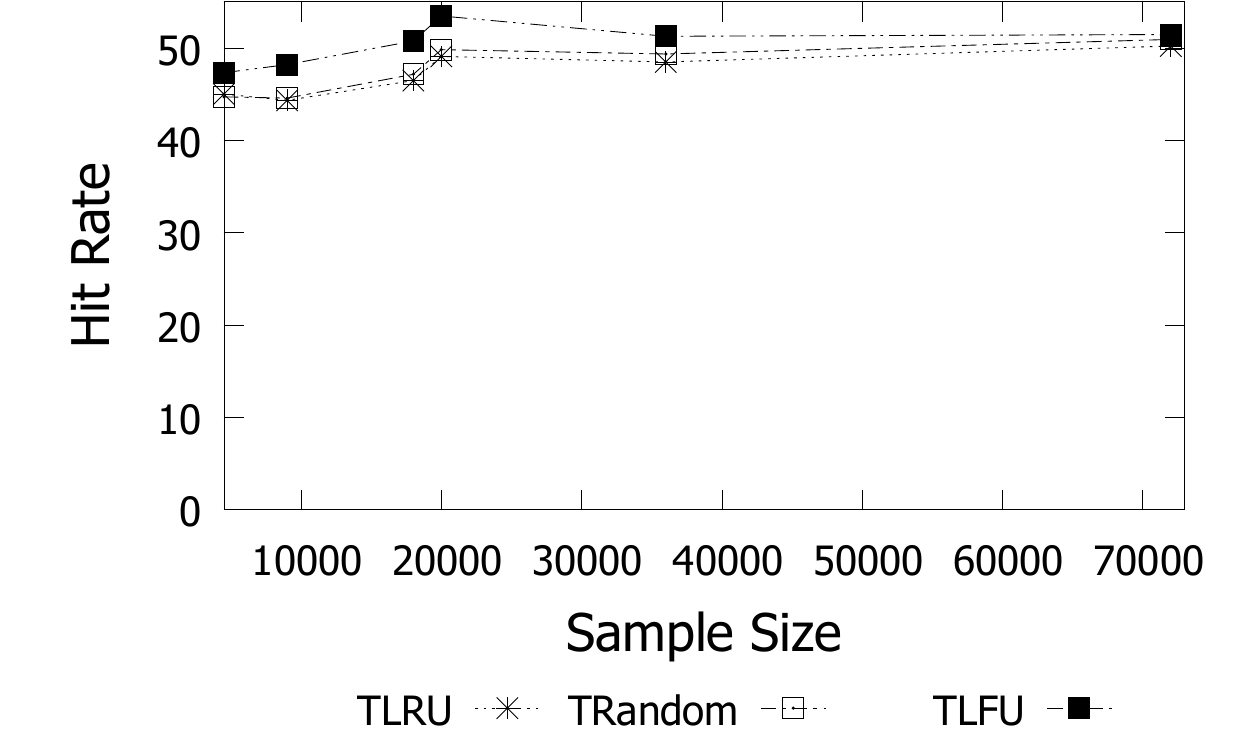}}
\subfigure[Cache size vs. hit-ratio for sample size of 20*CacheSize]{\includegraphics[scale=0.68]{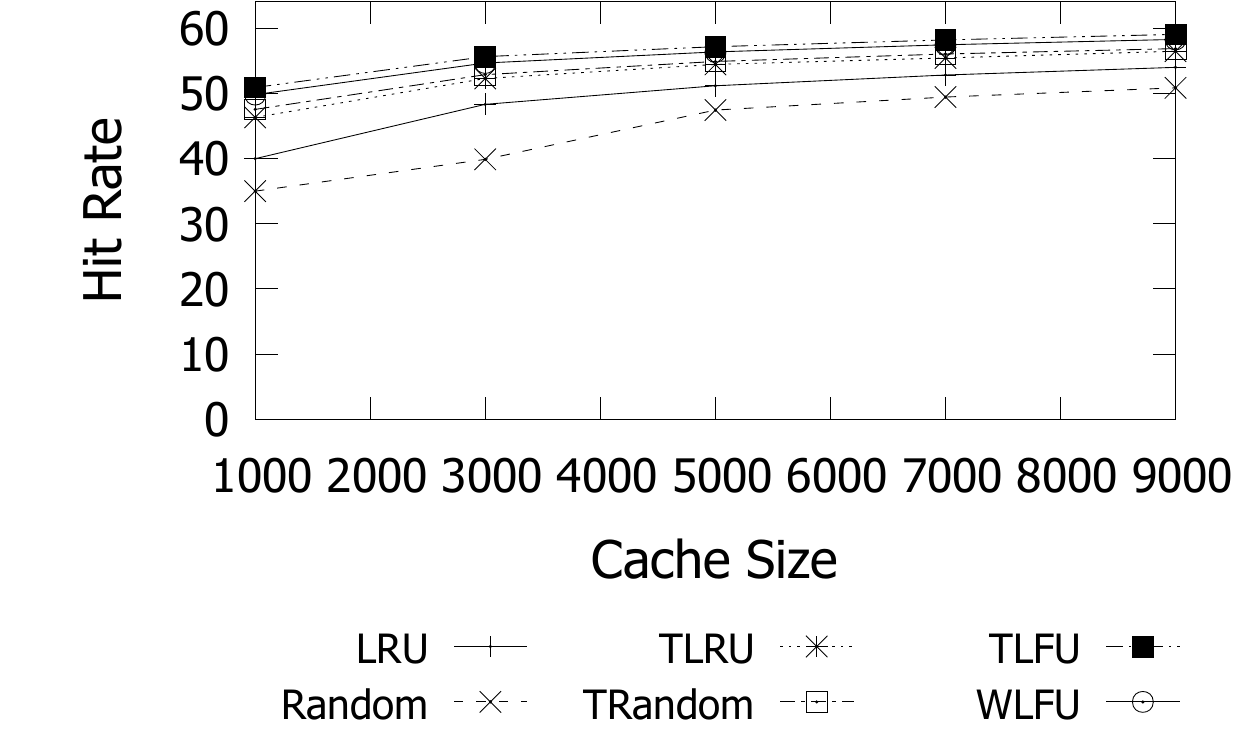}}
}
\caption{Evaluation over the Wikipedia trace}
\label{fig:Wiki}
\end{figure}

\subsection{Experiments with Caffeine}
\label{sec:realruns}

\subsubsection{Comparative Analysis}
\label{sec:comparing}

\paragraph*{Databases}
Figures~\ref{fig:glimpse} and~\ref{fig:ds1} present the hit-ratios of TLRU and W-TinyLFU compared to the state of the art schemes ARC and LIRS on the Glimpse~\cite{LIRS} and DS1~\cite{ARC} traces.
In both traces, TLRU and W-TinyLFU provide very good results.
In the Glimpse trace, they are at par with LIRS, while on the DS1 trace, both TLRU and W-TinyLFU outperform all other policies.
This indicates that the TinyLFU approach is a viable option for various types of databases.
\begin{figure}[t]
	\center{
		\includegraphics[scale=0.68]{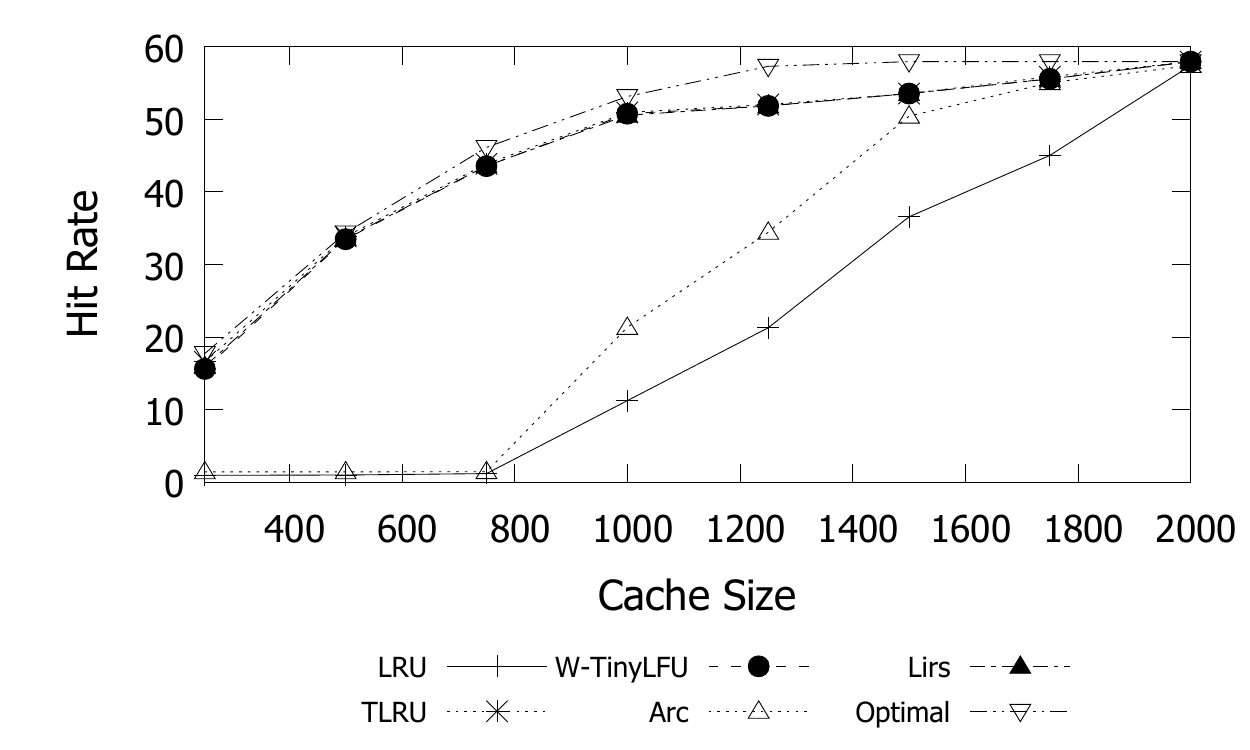}
    }
	\caption{Glimpse trace}
	\label{fig:glimpse}
\end{figure}
\begin{figure}[t]
	\center{
		\includegraphics[scale=0.68]{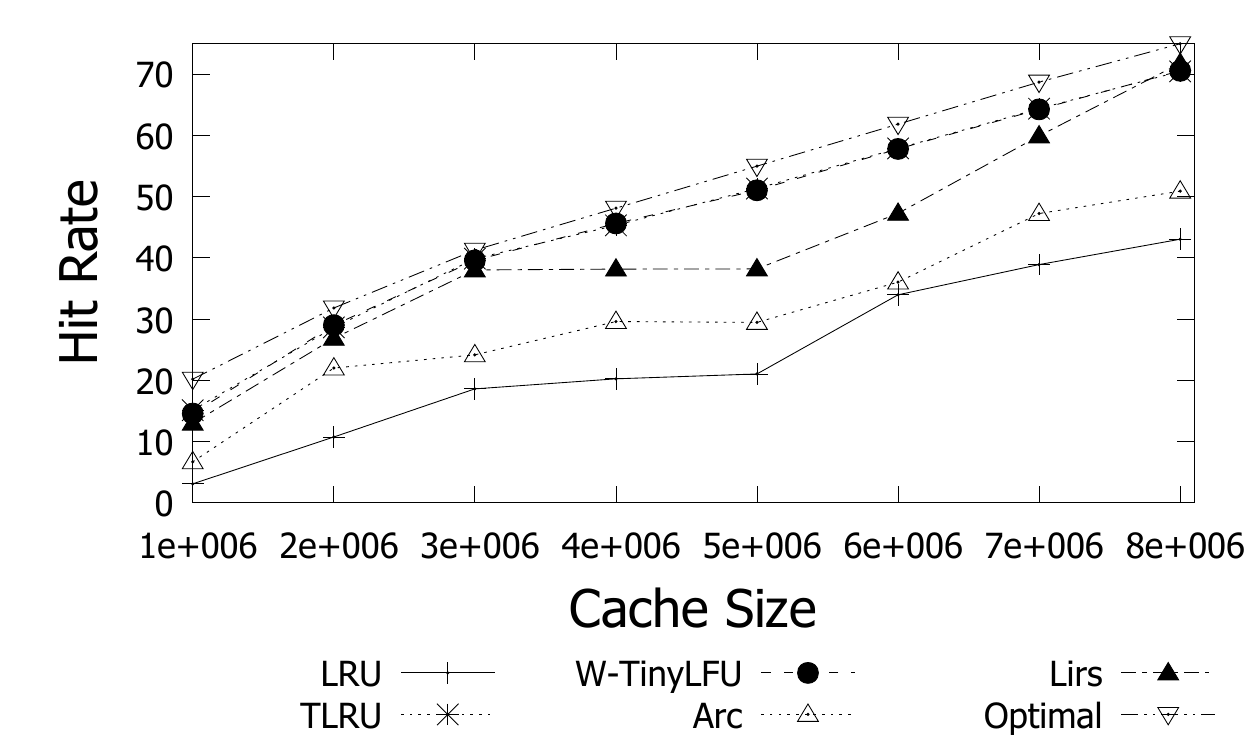}
	}
	\caption{DS1 database trace}
	\label{fig:ds1}
\end{figure}

\paragraph*{Windows File System}
Figures~\ref{fig:P8} and~\ref{fig:P12} shows results for the windows workstation traces P8 and P12~\cite{ARC}.
In these traces, we evaluate W-TinyLFU with both 1\% window cache and with a 20\% window cache.
As shown, both achieve very attractive hit ratios and outperform the other policies for all but the largest data points.
In Figure~\ref{fig:P12}, the graphs for the 20\% window cache and 1\% window cache intersect each other twice.
This indicates a potential benefit for being able to adaptively change the size of the window cache.

\begin{figure}[t]
	\center{
		\includegraphics[scale=0.68]{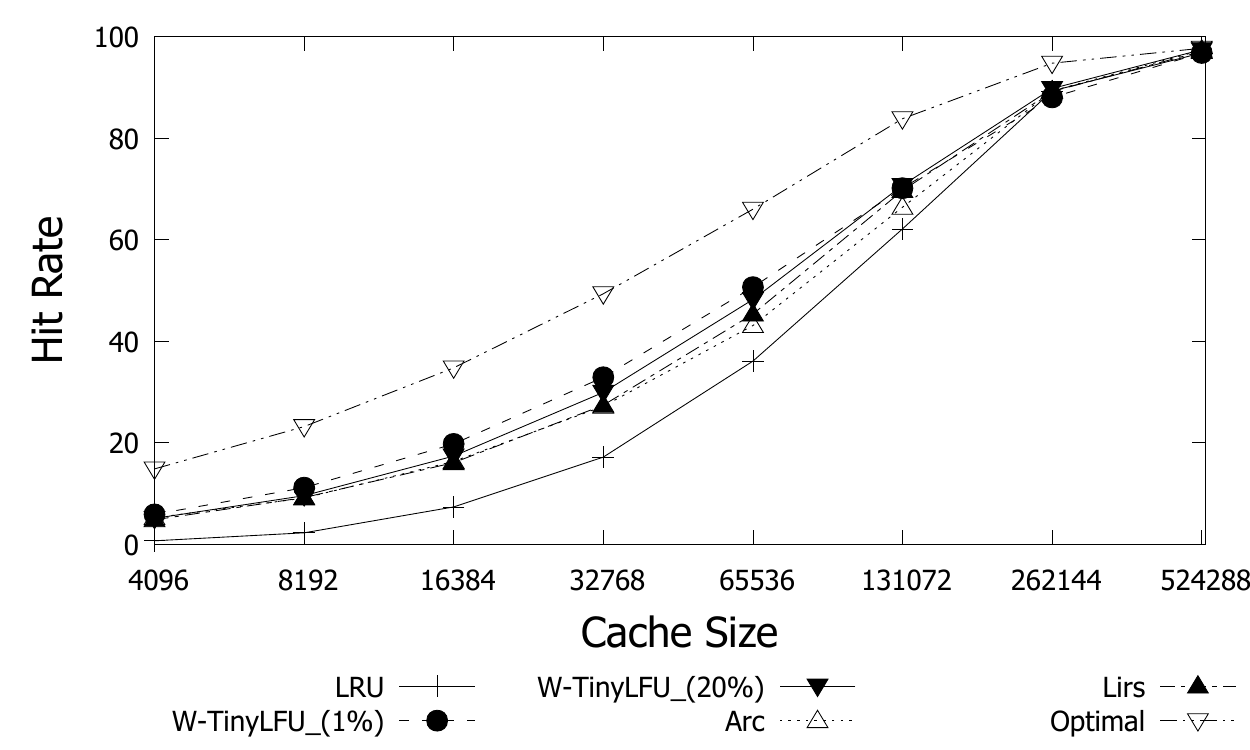}
	}
	\caption{P8 windows trace}
	\label{fig:P8}
\end{figure}
\begin{figure}[t]
	\center{
		\includegraphics[scale=0.68]{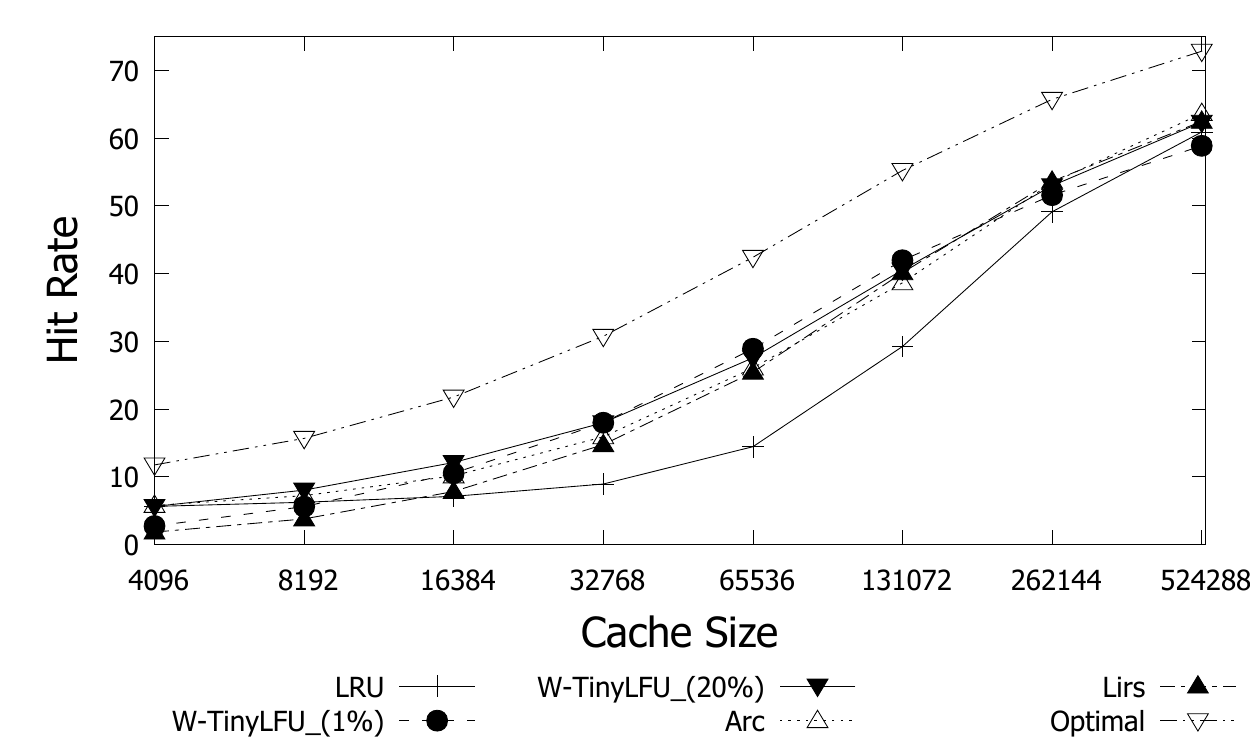}
	}
	\caption{P12 windows trace}
	\label{fig:P12}
\end{figure}

\paragraph*{Disk Accesses in OLTP Systems}
In the OLTP trace shown in Figure~\ref{fig:oltp}, TinyLFU behaves poorly due to the peculiarity of this trace as reported in Section~\ref{sec:method} above.
That is, ascending lists of sequential blocks accesses sprinkled with a few random accesses due to an occasional write replay or in-memory cache misses.
Therefore, TLRU is inferior to the rest.
Figure~\ref{fig:F1} and Figure~\ref{fig:F2} show results for two additional OLTP traces from servers running financial applications~\cite{UMAS}, exposing similar phenomena but with a lower intensity.
That is, in these two last figures the differences between all schemes are much smaller.

As can be seen, in all OLTP traces, W-TinyLFU's performance is comparable with the state of the art.
W-TinyLFU outperforms LRU and LIRS in all three traces.
Compared to ARC, W-TinyLFU is slightly better in the OLTP and F2 traces while in the F1 trace it is only better than ARC in smaller cache sizes with marginal differences between them.

We present here only W-TinyLFU(20\%) since it is slightly better than W-TinyLFU(1\%) on these traces and doing so reduces clobber in the graphs.
Section~\ref{sec:window-tuning} below explores the impact of the window cache size in W-TinyLFU on the hit ratio, where it is shown that a $20\%$ window cache is needed for best results with OLTP traces.
Unfortunately, increasing the window cache size reduces W-TinyLFU's performance with the other traces.
Hence, in the default configuration of the current version of Caffeine (2.0), the window cache size is fixed at $1\%$ of the total cache size in order not to burden the user with extra configuration parameters.

\begin{figure}[t]
	\center{
		\includegraphics[scale=0.68]{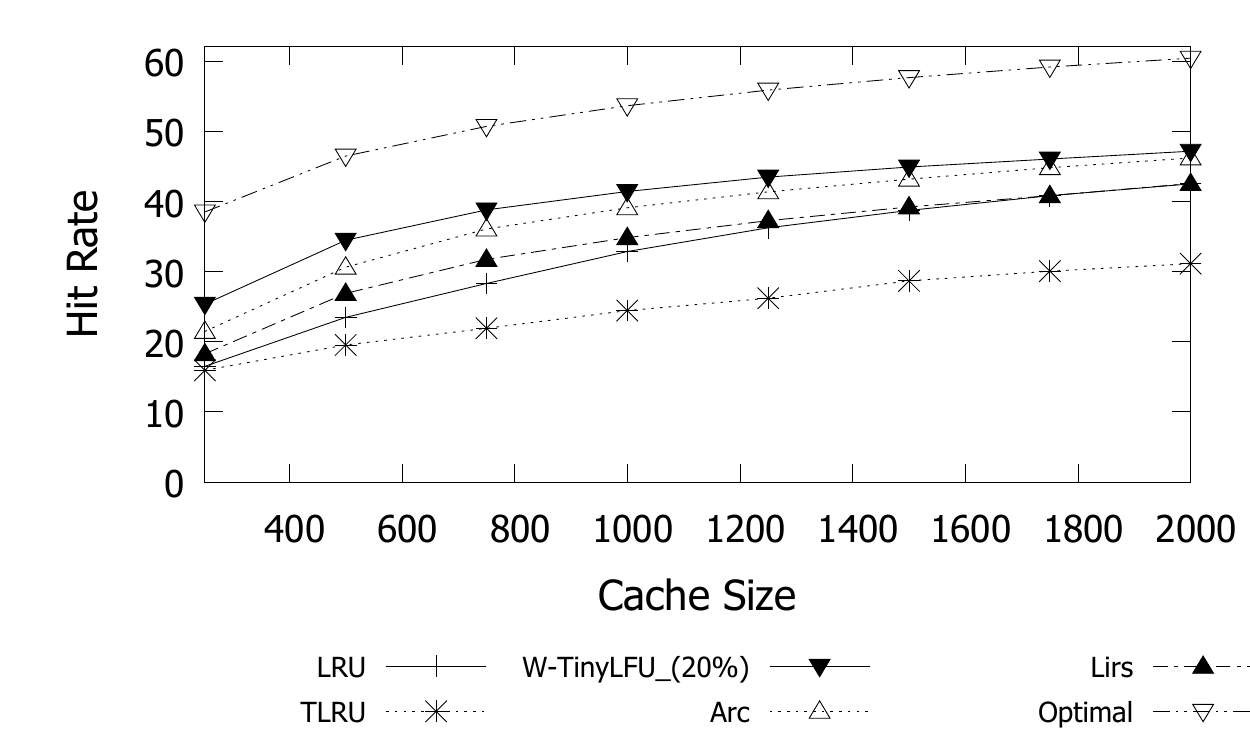}
	}
	\caption{OLTP trace}
	\label{fig:oltp}
\end{figure}
\begin{figure}[t]
	\center{
		\includegraphics[scale=0.68]{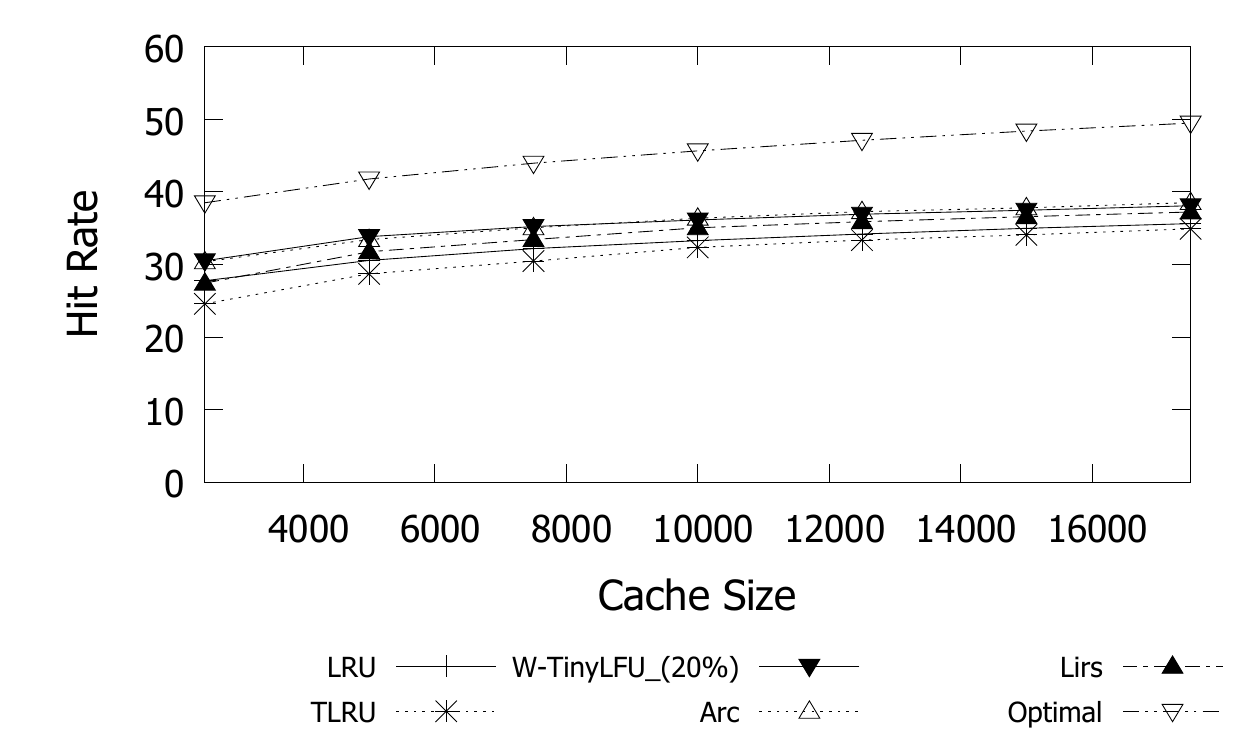}
	}
	\caption{Financial application (F1) trace}
	\label{fig:F1}
\end{figure}
\begin{figure}[t]
	\center{
		\includegraphics[scale=0.68]{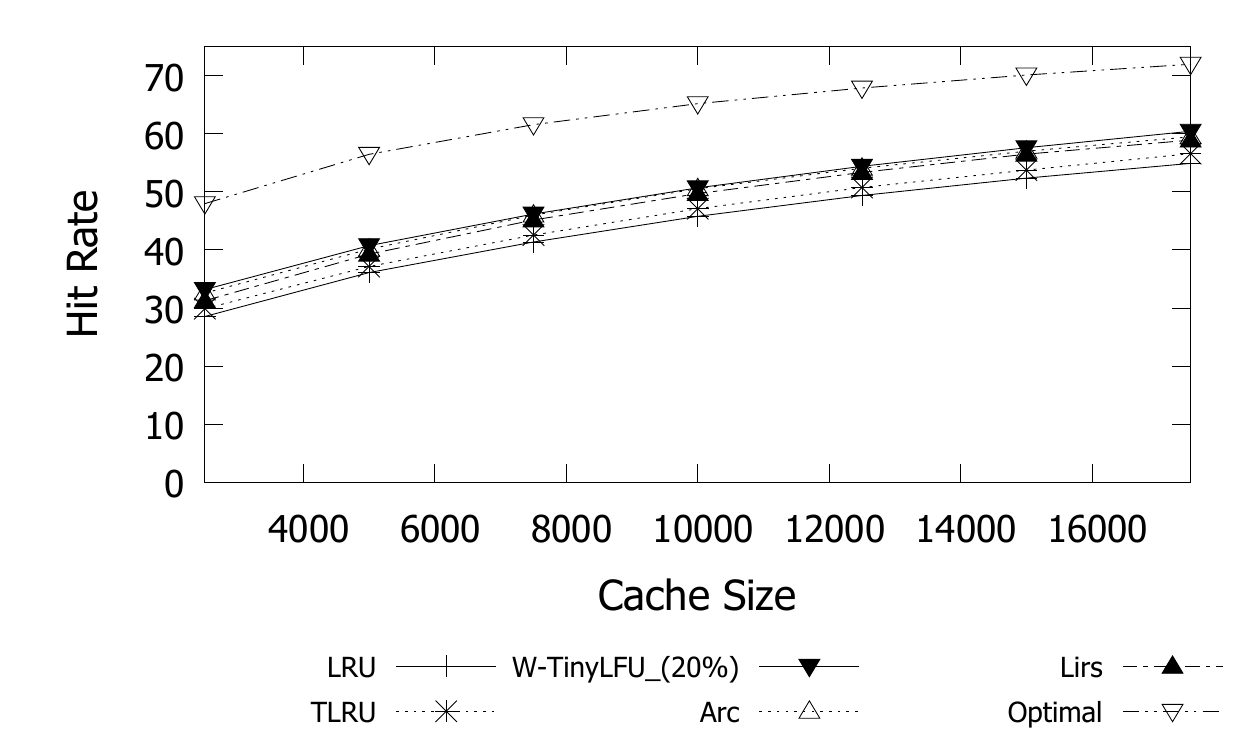}
	}
	\caption{Financial application (F2) trace}
	\label{fig:F2}
\end{figure}

\paragraph*{SPC1-like}
Figure~\ref{fig:SPClike} shows our result on the SPC1-like trace taken from~\cite{ARC}.
As can be observed, TLRU and W-TinyLFU outperform all other policies on that trace as well.
\begin{figure}[t]
	\center{
		\includegraphics[scale=0.68]{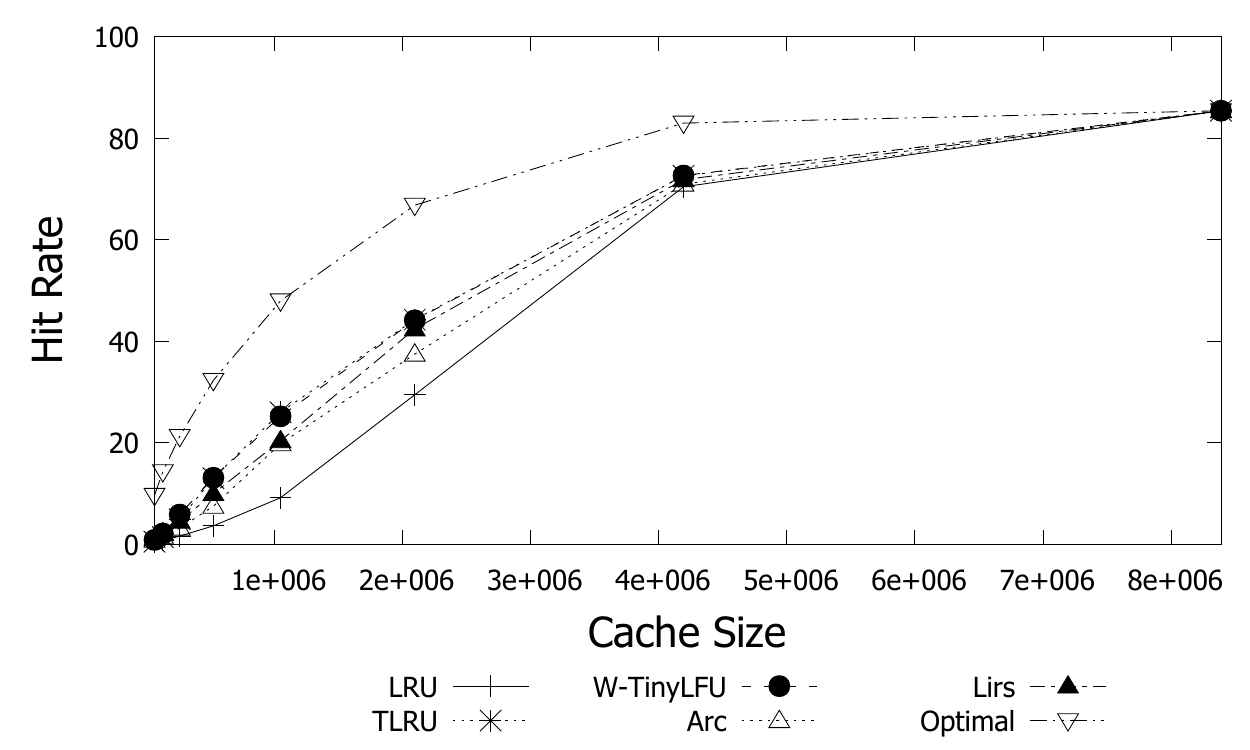}
	}
	\caption{SPC1-like trace}
	\label{fig:SPClike}
\end{figure}

\paragraph*{Search Engine Queries}
Figures~\ref{fig:search},~\ref{fig:search1},~\ref{fig:search2} and~\ref{fig:search3} demonstrate the behavior of W-TinyLFU and TLRU with a variety of search engine traces from~\cite{ARC,UMAS}.
As can be observed, between ARC and LIRS, ARC seems better for small caches and LIRS for large ones.
Yet, both TLRU and W-TinyLFU outperform them throughout.

\begin{figure}[t]
	\center{
		\includegraphics[scale=0.68]{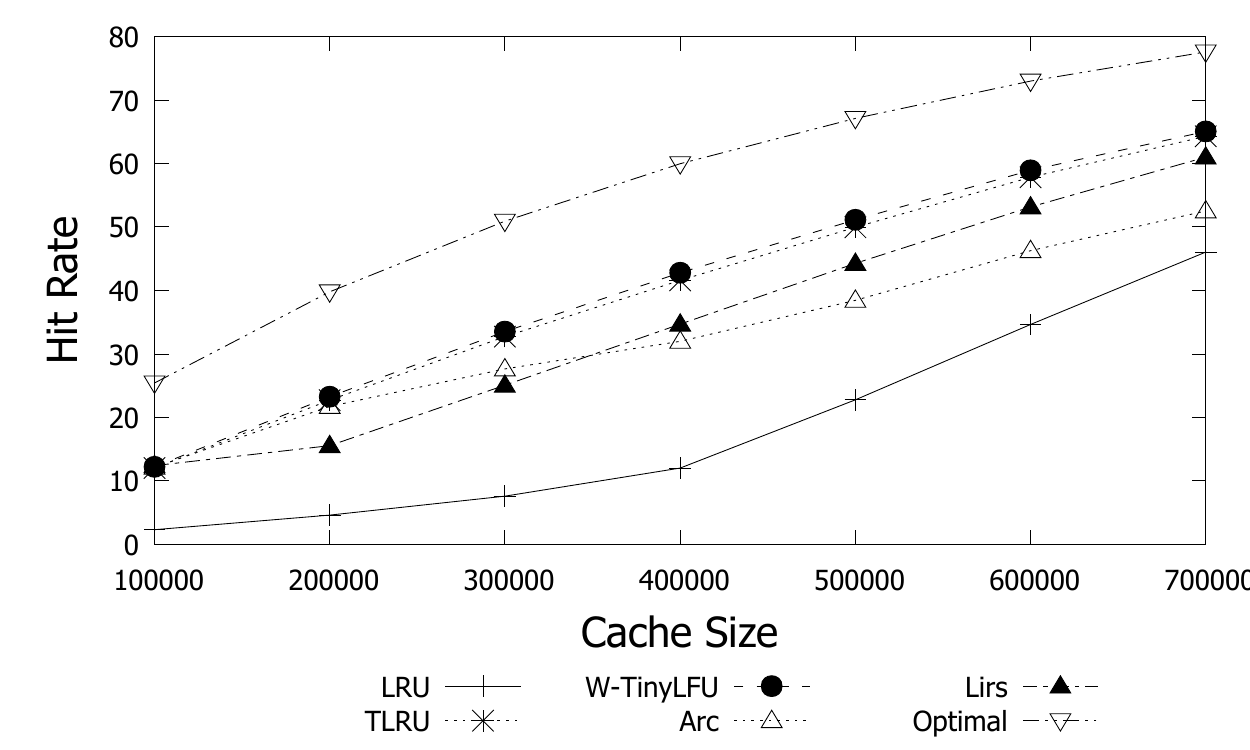}
	}
	\caption{Search engine trace(S3)}
	\label{fig:search}
\end{figure}
\begin{figure}[t]
	\center{
	\includegraphics[scale=0.68]{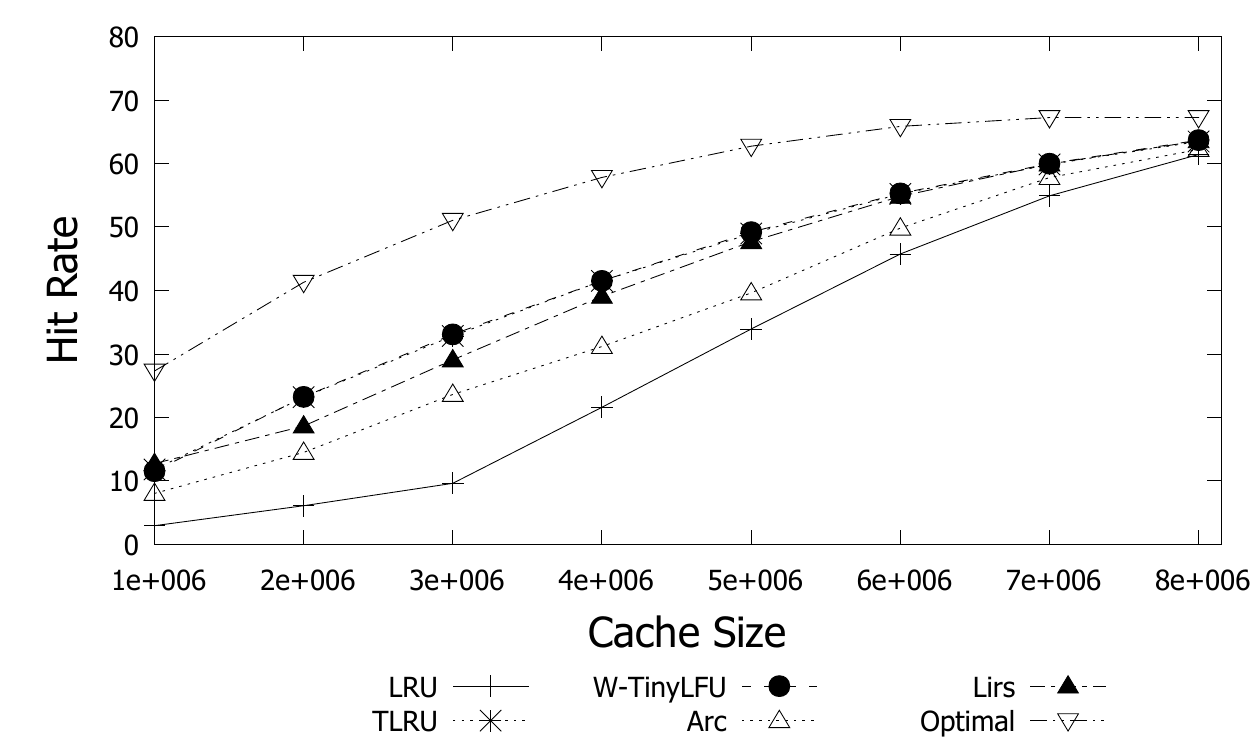}
	}
	\caption{Search engine trace(WS1)}
	\label{fig:search1}
\end{figure}
\begin{figure}[t]
	\center{
		\includegraphics[scale=0.68]{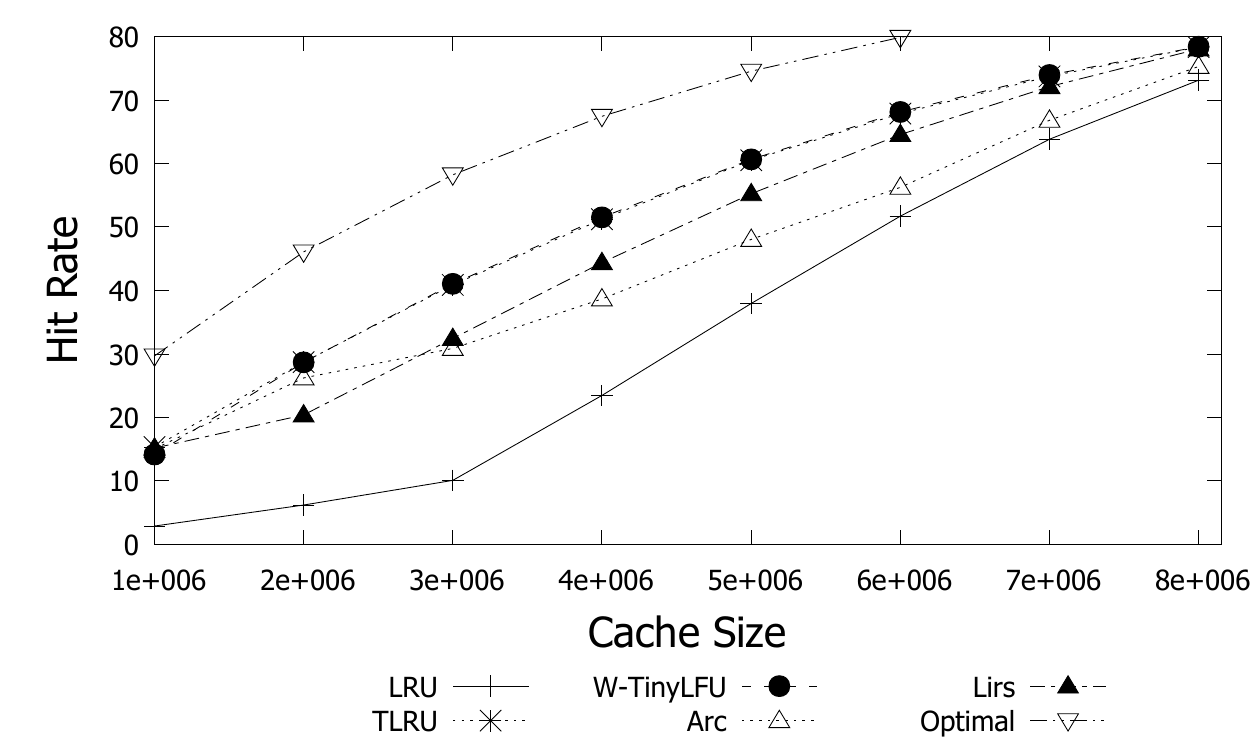}
	}
	\caption{Search engine trace(WS2)}
	\label{fig:search2}
\end{figure}
\begin{figure}[t]
	\center{
		\includegraphics[scale=0.68]{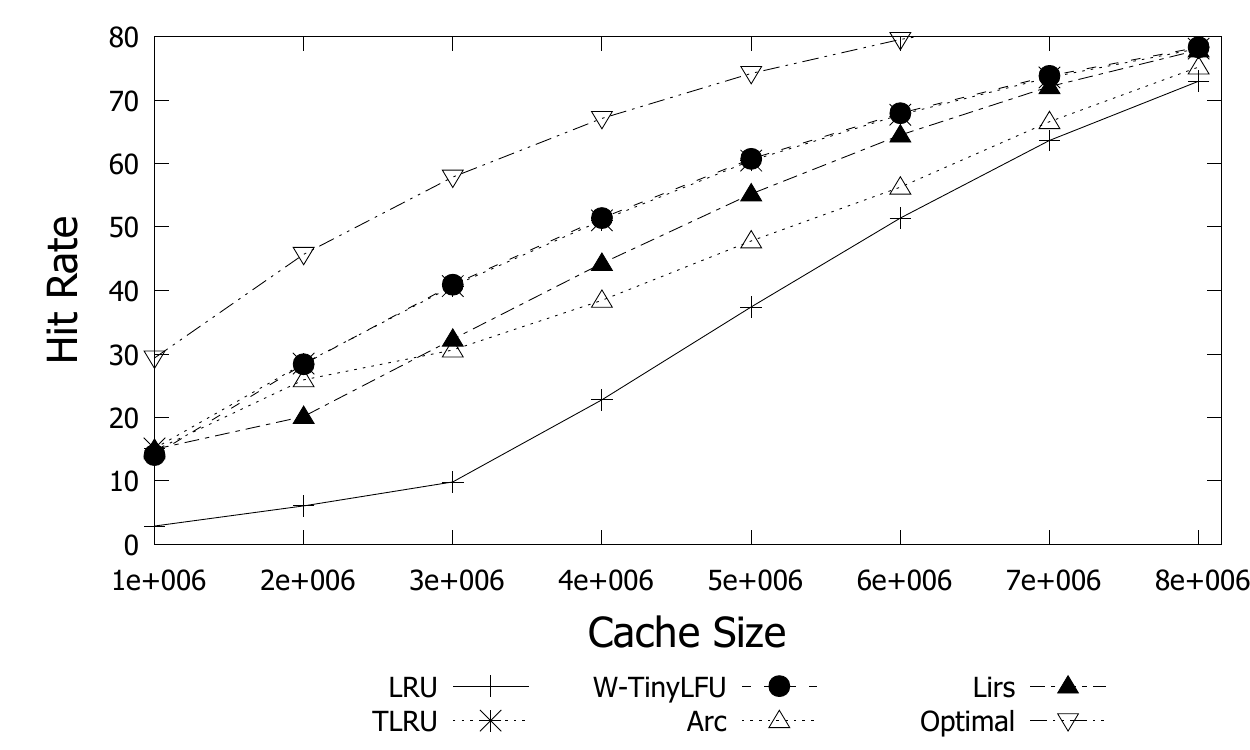}
	}
	\caption{Search engine trace(WS3)}
	\label{fig:search3}
\end{figure}

\subsubsection{Tuning the Window Cache Size in W-TinyLFU}
\label{sec:window-tuning}

As reported above, for the majority of traces, W-TinyLFU with a window cache whose size is $1\%$ of the total cache outperformed (or tied) all other schemes.
In contrast, in OLTP, F1, and F2, TinyLFU typically underperformed and while a window cache size of $1\%$ improved the hit ratio, it was still not the top performer.
In these traces, a larger window cache is needed.
Figure~\ref{fig:oltpWindowCache} explores the impact of the cache allocation between the window cache and the main cache on the hit ratio of W-TinyLFU under these three traces: OLTP, F1, and F2.
For OLTP we tested a total cache size of $1,000$ items while for F1 and F2 the total cache size is $2,500$ items.
As can be observed, window cache sizes of $20-40\%$ seem to perform best on these traces.

\begin{figure}[t]
 	\center{
 		\includegraphics[scale=0.68]{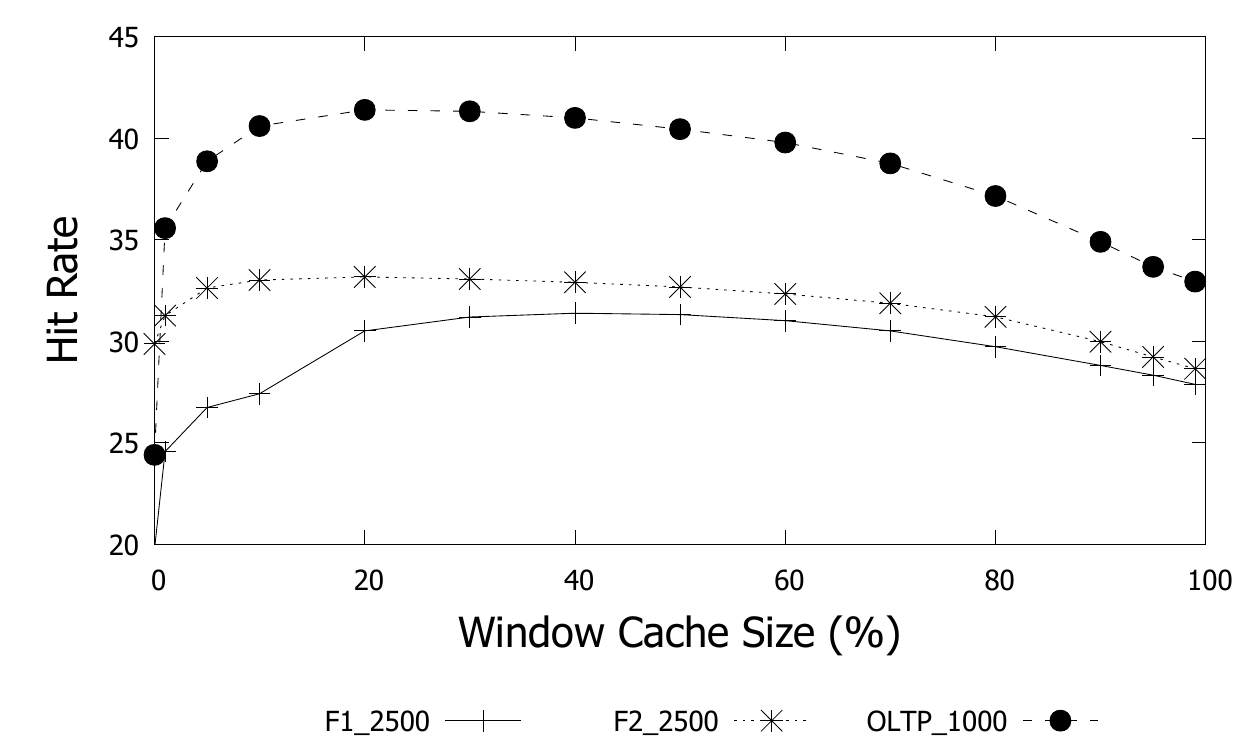}
 	}
 	\caption{Window/main cache balance for OLTP traces}
 	\label{fig:oltpWindowCache}
\end{figure}

\subsection{Sources of Inaccuracy}
\label{sec:error}

The most obvious source of inaccuracy in TinyLFU is the approximation accuracy introduced by false positives.
However, there is also the truncation error that is caused from performing integer division rather than a real (float) division in the reset operation.
For static distributions, it is also feasible to measure the sampling error, that is, how well TinyLFU was able to understand the underlining distributions.
All these factors together can explain the (in)accuracy of TinyLFU.

We define the \emph{Sampling Error} to be the difference between the hit ratio of an accurate TinyLFU with floating point counters to the ideal hit ratio achievable under a constant distribution.
The sampling error can be lowered for static distributions by using a larger sample size.

Next, the \emph{Truncation Error} is the change in hit ratio between an accurate TinyLFU (with no Doorkeeper and no false positives) and floating point counters to that of an accurate TinyLFU with integer counters.
The only difference between these two implementations is the type of division performed in the reset operation.
The truncation error can be lowered by allowing higher counts by counters (more bits per counter).

Finally, the \emph{Approximation Error} is the difference in hit ratio we obtain by using an approximated version of TinyLFU (as described in this paper) instead of an accurate one.
This means using a counting Bloom filter instead of a hash table.
The difference in hit ratio is due to false positives that may distort the sampling and make infrequent items appear frequent.

The results for a static Zipf 0.9 distribution are displayed in Figure~\ref{fig:summery}.
As can be observed, the approximation error does not appear at all until $\approx 1.25$ bytes per sample item.
That is, the approximate version of TinyLFU offers the same hit ratio as the accurate version.
As expected, the truncation error is smaller for small sample sizes and therefore in a $9$k items sample it is more dominant than in a $17$k items sample.
In the $17$k sample, the truncation error is almost negligible.

The sampling error is the most tricky one since increasing the sample size only helps with static distributions.
The sampling error gets smaller the larger we set the sample size.
This behavior is obviously not the case for real workloads, in which the sample size is to be chosen according to an empirical trial and error process.

Recall also that the TinyLFU version that was implemented in Caffeine employs CM-Sketch rather than a counting Bloom filter.
In its default configuration, it allocates $8$ bytes per cached entry and maintains a sample size that is $10$ times the cache size.
In other words, it uses $0.8$ bytes per element.
By doubling the allocated space to $16$ bytes per entry, the obtained hit ratio increases by approximately $\approx 0.5\%$ on the various traces.
In other words, the current configuration of Caffeine carries roughly a $0.5\%$ approximation error.


\begin{figure}[t]
\center{
\subfigure[9k items sample]{ \includegraphics[scale=0.7]{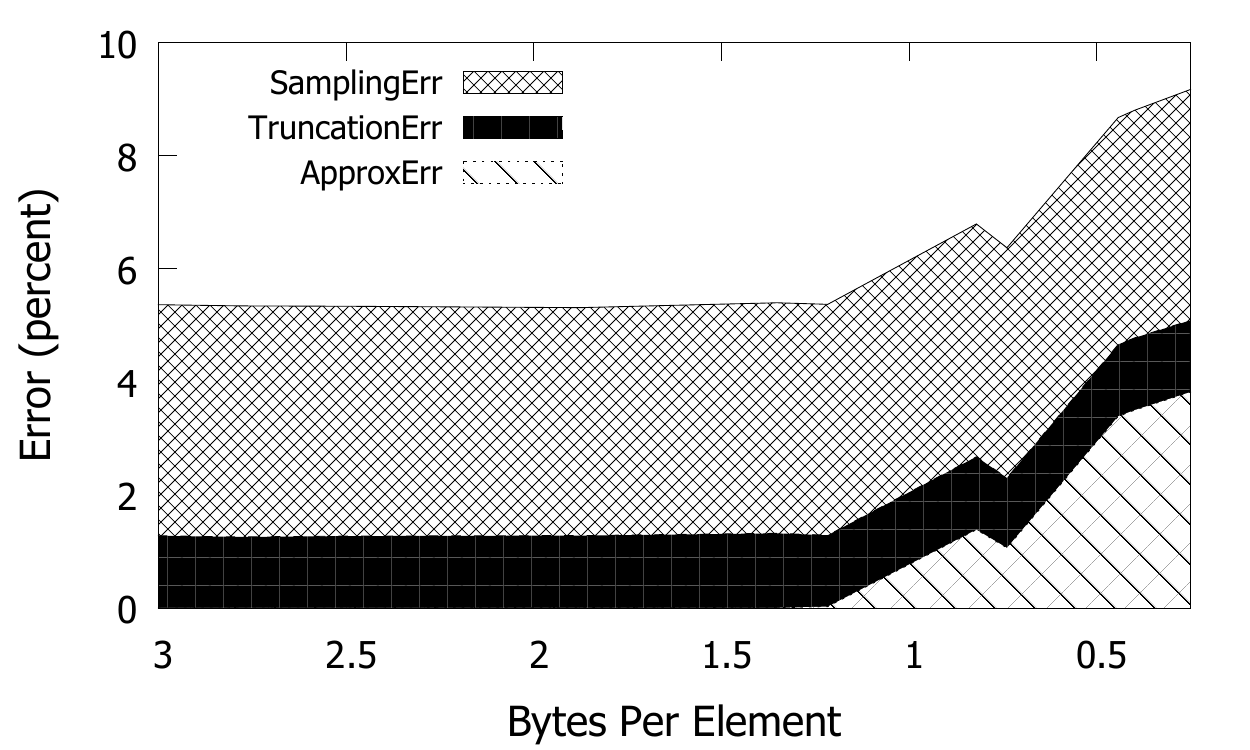}}
\subfigure[17k items sample]{\includegraphics[scale=0.7]{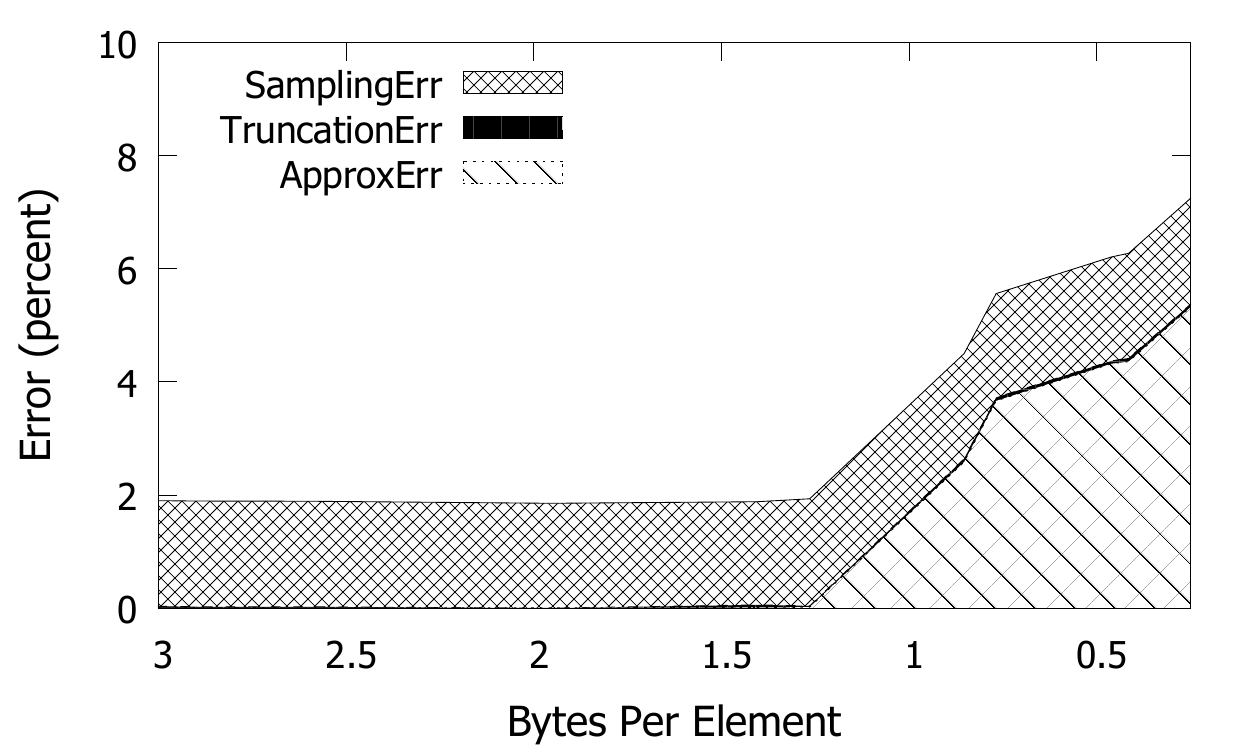}}

}
\caption{Different errors for 1k items cache under Zipf 0.9 distribution, for different sample sizes. }
\label{fig:summery}
\end{figure}

\section{Conclusion}
\label{sec:discussion}
We have introduce TinyLFU, an approximate frequency based cache admission policy.
We showed that TinyLFU can augment caches of arbitrary eviction policy and significantly improve their performance.
We also optimized the memory consumption of TinyLFU using adaptation of known approximate counting techniques with novel techniques tailored specifically
in order to achieve low memory footprint for caches.

Unlike previous schemes that combine ghost cache entries into the eviction policy considerations, in TinyLFU the admission policy is orthogonal to the eviction policy.
This decoupling follows the separation of concerns principle, simplifies design and implementation, and enables optimizing each independently.
In particular, the use of approximate sketching techniques for the admission policy enables maintaining statistics for a relatively large sample while consuming only small amounts of meta-data.


The performance of TinyLFU and its variant W-TinyLFU have been explored and validated through simulations using both synthetic traces as well as multiple real world traces from several different sources.
As mentioned before, TinyLFU is freely available in open source as part of the Caffeine Java caching project~\cite{CaffeineProject}.
Looking into the future, we look into an improved implementation based on succinct hash table design~\cite{TinyTable,TinySet}.

{
\bibliographystyle{plain}
\bibliography{refs}

}

\end{document}